\DeclareMathSymbol{:}{\mathord}{operators}{"3A}
\def\maketitle{
\@author@finish
\title@column\titleblock@produce
\suppressfloats[t]}
\theoremstyle{plain}
\newtheorem{definition}{Definition}
\newtheorem{lemma}{Lemma}
\definecolor{darkblue}{rgb}{0.,0.,0.4}
\definecolor{darkred}{rgb}{0.5,0.,0.}
\newcommand{\calC}{\mathcal{C}}
\newcommand{\calD}{\mathcal{D}}
\newcommand{\calE}{\mathcal{E}}
\newcommand{\calF}{\mathcal{F}}
\newcommand{\calG}{\mathcal{G}}
\newcommand{\calH}{\mathcal{H}}
\newcommand{\calM}{\mathcal{M}}
\newcommand{\calQ}{\mathcal{Q}}
\newcommand{\calR}{\mathcal{R}}
\newcommand{\calS}{\mathcal{S}}
\newcommand{\calU}{\mathcal{U}}
\renewcommand{\tilde}{\widetilde}
\newcommand{\bs}{\textbf{s}}
\newcommand{\tr}{\operatorname{tr}}
\newcommand{\norm}[1]{\left\lVert #1 \right\rVert}
\newcommand{\identity}{\mathbb{I}}
\newcommand{\polylog}{{\rm polylog}}
\newcommand{\myloop}{{\rm cl}}
\newcommand{\mytrivial}{{\rm tr}}
\newcommand{\trivialstate}{\rho_\mytrivial}
\newcommand{\conv}{\mathsf{conv}}
\newcommand{\sketbra}[1]{\ket{#1}\!\bra{#1}}
\newcommand{\ketbra}[2]{\ket{#1}\!\bra{#2}}
\newcommand{\eqL}{=}
\newcommand{\eqfig}[2]{\vcenter{\hbox{\includegraphics[height=#1]{#2}}}}
\definecolor{dkpurple}{rgb}{0.5,0.,0.5}
\begin{document}
\title{Mixed-state phases from local reversibility}

\author{Shengqi Sang}
\email{sangsq@stanford.edu}
\affiliation{Department of Physics, Stanford University, Stanford, CA 94305, USA}
\affiliation{Perimeter Institute for Theoretical Physics, Waterloo, Ontario N2L 2Y5, Canada}

\author{Leonardo A. Lessa}
\email{llessa@pitp.ca}
\affiliation{Perimeter Institute for Theoretical Physics, Waterloo, Ontario N2L 2Y5, Canada}
\affiliation{Department of Physics and Astronomy, University of Waterloo, Waterloo, Ontario N2L 3G1, Canada}

\author{Roger S. K. Mong}
\email{rmong@pitt.edu}
\affiliation{Department of Physics and Astronomy, University of Pittsburgh, Pittsburgh, Pennsylvania 15260, USA}

\author{Tarun Grover}
\email{tagrover@ucsd.edu}
\affiliation{Department of Physics, University of California at San Diego, La Jolla, California 92093, USA}

\author{Chong Wang}
\email{cwang4@pitp.ca}
\affiliation{Perimeter Institute for Theoretical Physics, Waterloo, Ontario N2L 2Y5, Canada}

\author{Timothy H. Hsieh}
 \email{thsieh@pitp.ca}
\affiliation{Perimeter Institute for Theoretical Physics, Waterloo, Ontario N2L 2Y5, Canada}

\begin{abstract}
We propose a refined definition of mixed-state phase equivalence based on locally reversible channel circuits.  We show that such circuits preserve topological degeneracy and the locality of all operators including both strong and weak symmetries. 
Under a locally reversible channel, weak unitary symmetries are locally dressed into channel symmetries, a new generalization of symmetry for open quantum systems.
For abelian higher-form symmetries, we show the refined definition preserves anomalies and spontaneous breaking 
of such symmetries within a phase.
As a primary example, a two-dimensional classical loop ensemble is trivial under the previously adopted definition of mixed-state phases. However, it has non-trivial topological degeneracy arising from a mutual anomaly between strong and weak 1-form symmetries, and our results show that it is not connected to a trivial state via locally reversible channel circuits.
\end{abstract}

\maketitle
\def\thefootnote{*$\dagger$}\footnotetext{S.S. and L.A.L. contributed equally to this work.}\def\thefootnote{\arabic{footnote}}

\tableofcontents

\section{Introduction}

Quantum information insights have led to substantial progress in understanding quantum many-body phenomena. Among them, a circuit-based definition of quantum phases of matter has become a cornerstone in classifying and characterizing gapped quantum phases of matter~\cite{chen2010local, hastings2005quasiadiabatic}. The definition, which states that two gapped ground states are in the same phase if they are related by a local unitary circuit, allows one to make rigorous statements about universal properties of quantum phases, especially those displaying topological order.

Recently, there has been progress in extending the notion of phase of matter from pure ground states to mixed states. This is motivated by the fact that physical systems can couple with an environment in increasingly controllable ways, enabling a large variety of non-trivial mixed states \cite{diss1, diss2, fan2023diagnostics, lee2023quantum, zou2023channeling, lu2023mixed, lee2022decodingmeasurementpreparedquantumphases, zhu2022nishimoris, balasubramanian2025localautomaton2dtoric}. An analogous circuit definition for mixed-state phases was proposed in \cite{hastings2011nonzero, coser2019classification}; namely, that two mixed states belong to the same phase if they can be related to each other through a pair of local channel circuits (Def.~\ref{def: two_way}, henceforth referred to as two-way channel definition).
Leveraging this definition, many fruitful results have been obtained for topological phases in mixed states~\cite{ma2023average, ma2023topological, ma2024symmetry, chen2023separability,chen2023symmetryenforced, xue_tensor_2024, lessa2024mixedstate, lessa2025higher, ellison2024towards, wang2023intrinsic, wang2024anomaly, wang_analog_2024, de2022symmetry, sang2023mixed, PhysRevX.14.041031, sang2024stability, zhou2025finite}.

A fundamental property of local unitary evolutions $U$ is that local operators $O$ remain local after the evolution, with $U^\dagger OU$ having slightly larger support compared to $O$. Preserving locality is essential for maintaining the defining properties of a phase, including topological order~\cite{hastings2005quasiadiabatic, bravyi2006lieb}. The locality-preserving 
nature of local unitary circuits $U$ is due to the fact that {\it individual} forward and backward gates outside of the light cone of $O$ cancel in local pairs; in this sense, $U$ is {\it locally reversible}. We take this additional property for granted in local unitary evolution.  However, for local channels, this additional requirement is strictly stronger: By definition, two-way channels are globally reversible, but not necessarily locally reversible, with respect to given input states. As a result, we find that two-way channels that are not locally reversible may not preserve the locality of all operations; in particular, we will demonstrate that certain weak symmetries of a mixed state may not remain local under two-way channels and lead to an unsatisfactory classification of phases.

Therefore, in this work, we propose a definition of mixed-state phases based on local reversibility, which is manifestly finer than the two-way channel definition. To guarantee local reversibility, we define two mixed states to belong to the same phase if one can be evolved into the other by a local channel circuit that preserves a finite {\it Markov length} throughout the evolution (See Def.~\ref{def: markov_path}). 
Informally, the Markov length of a mixed state captures how local is a region's correlation with the rest of the system. Its precise definition is via the conditional mutual information in the tripartition shown in Fig. \ref{fig: main}(b).
Crucially, the action of a local channel circuit in which the evolving state's Markov length remains finite is guaranteed to be locally reversible.


Using this refined definition of phase equivalence, we derive the following results, which may be of independent interest:   
\begin{enumerate}[leftmargin=.5cm]
\item We define in Sec. \ref{sec: TD} the topological degeneracy of mixed-states via local indistinguishability, which generalizes the topological degeneracy in topologically ordered ground states. We prove that topological degeneracy is invariant in each mixed-state phase defined using locally reversible channel circuits.
\item We prove in Sec. \ref{sec: anomaly} that the spontaneous symmetry breaking of 1-form symmetries, or equivalently the 't Hooft anomaly associated with the symmetries, is also a phase invariant. More concretely, if $\rho$ has spontaneously broken 1-form symmetries, then any other state $\sigma$ in the same phase as $\rho$ also possesses a dressed 1-form symmetry that is spontaneously broken. Remarkably, the dressed symmetry generically takes form of a quantum channel and we dub this a {\it channel symmetry}, whose explicit form is determined by the locally-reversible channel circuits connecting $\rho$ and $\sigma$.
\item We show in Sec. \ref{sec: TD_TE} how the broken weak 1-form symmetry can be detected by a non-zero topological entropy in the Levin-Wen partition~\cite{PhysRevLett.96.110405}, as well as by a non-trivial topological degeneracy on an annulus-shaped subregion. 
\end{enumerate}

We use the classical loop ensemble in 2D \cite{castelnovo2007topological, Zhang2025SWSSB} as our primary example for motivating and demonstrating the consequences of the refined definition. Despite lacking quantum coherence, the classical loop state exhibits topological properties, and, in particular, serves as a non-trivial classical memory. Indeed, we show that, while the loop state is classified as trivial in the two-way definition, it is non-trivial under our new definition via the above three results. Despite using this specific state for concreteness, our techniques and main findings are generally applicable.



\begin{figure}
    \includegraphics[width=\linewidth]{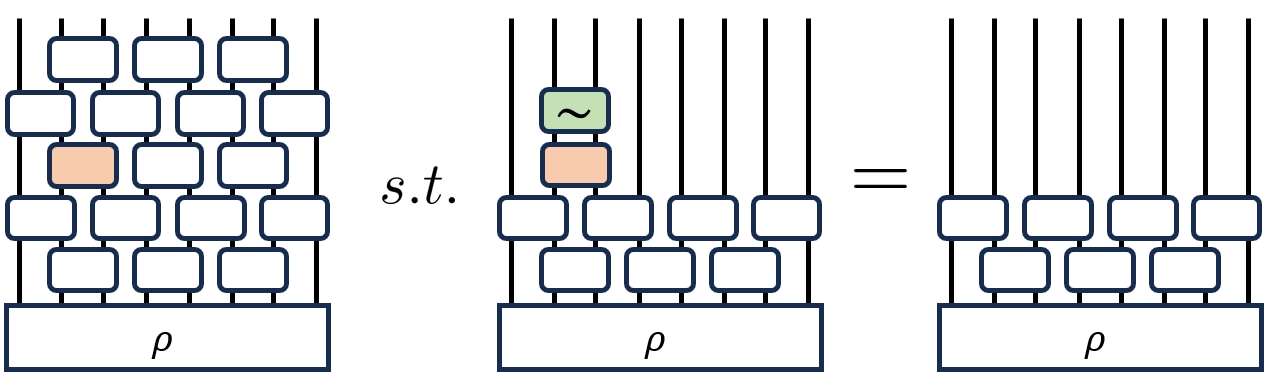}\\
    (a)\\
    \vspace{0.2cm}
    \includegraphics[width=.85\linewidth]{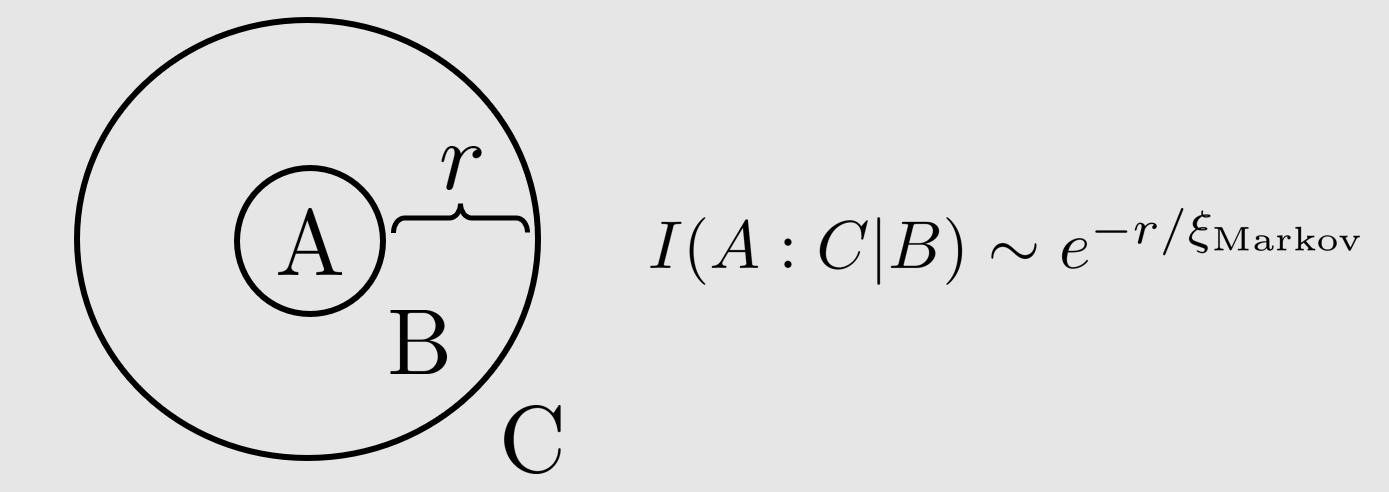}\\
    (b)
    \caption{
    (a) A channel circuit acting on $\rho$ is locally reversible if each of its gates  (\textit{e.g.} the one highlighted in red) is reversible, meaning that the gate's action upon the state at the time can be reversed locally by some other gate (the gate containing $\sim$). As such, local reversibility is a property of both the circuit and the input state $\rho$.
    Locally reversible circuits are natural extensions of local unitary circuits, which are locally reversible for any input state, to open many-body systems.
    (b) Markov length is defined as the length scale at which the conditional mutual information $I(A:C|B)$ decays with $B$'s width for the spatial partition shown in the figure. A sufficient condition for local reversibility is that the evolving state's Markov length $\xi_{\rm Markov}$ (Eq.\eqref{eq: markov_length}) is finite throughout the circuit. 
    } \label{fig: main}
\end{figure}

\section{Two definitions of mixed-state phase equivalence}\label{sec: defs}

\subsection{The two-way channel definition and the classical loop state}\label{sec: twoway}

Before introducing our refined definition of phase equivalence, we first review the one commonly used in most existing works, which we refer to as the two-way channel definition.

\begin{definition}[Two-way channel definition of phase equivalence~\cite{coser2019classification}]\label{def: two_way}
Two states $\rho_1$ and $\rho_2$ are in the same phase if there exist a pair of local channel circuits $\calE$ and $\calR$ such that:
\begin{equation}
\calE[\rho_1] \eqL \rho_2\quad\calR[\rho_2] \eqL \rho_1.
\end{equation}
\end{definition}
Here, a channel circuit is a circuit composed of quantum channels each supported on a small region, and ``local'' means that both the range of each individual channel in the circuit and the circuit depth are at most $\tilde O(1)$\footnote{$\tilde O(f(x))$ stands for $O(f(x)\polylog(x))$}.
We note that sometimes $\calE$ and $\calR$ are replaced by continuous time local Lindbladian evolutions (\textit{e.g.} in Ref.~\cite{coser2019classification}). In this work, we use the two notions interchangeably.

When we equate two expressions involving many-body states, the precise meaning is that their difference vanishes in the thermodynamic limit, namely:
\begin{equation}
x\eqL y\ \Leftrightarrow\ \lim_{L\rightarrow \infty}\norm{x-y} = 1/L^{-\infty}
\end{equation}
where $L$ is a length scale, usually taken as the linear size of the total system, or that of some subregion. $L^{-\infty}$ denotes a function that decays quicker than any $1/{\rm poly}(L)$. When $x$ and $y$ are numbers, $\norm{\cdot}$ denotes absolute value. When $x$ and $y$ are density matrices, $\norm{\cdot}$ denotes the operator trace norm. In the latter case, the super-polynomial decay of $\norm{x-y}$ guarantees that all local observables and information-theoretic quantities (\textit{e.g.}, von Neumann entropy) of $x$ and $y$ are also sufficiently close.

As mentioned in the introduction, the two-way channel definition has been successfully applied to extend various quantum orders to open systems, including topological order. On the other hand, since mixed states also encompass classical statistical mechanical ensembles, it is desirable for a complete definition to capture known classifications of classical statistical mechanics phases. However, the classification based on Def. \ref{def: two_way} is not consistent in some cases, as we will show.

Consider a 2D square lattice $\Lambda$ with one qubit defined on each edge $e$. A product state $\ket{\bs}$ in the Pauli $Z$ basis (\textit{e.g.}, $\ket{\bs}=\ket{01001...}$) is called a loop configuration if $A_v \ket{\bs} \equiv\prod_{e\in v}Z_e \ket{\bs}=\ket{\bs}$ is satisfied on each vertex $v$.
The \textbf{classical loop state} $\rho_{\myloop}$ on $\Lambda$ is defined as the uniform distribution of all possible loop configurations:
\begin{equation}
    \rho_{\myloop} \propto \sum_{\bs}\delta(\bs\text{ is a loop config.})\sketbra{\bs}
\end{equation}
If $\Lambda$ has a nonzero genus (\textit{i.e.}, $\Lambda$ has ``holes''), then we further require the summation to only include $\bs$ with even number of non-contractable loops around each hole.  For each hole, there is another state $\rho'_\myloop$ which is an ensemble over all configurations with odd number of non-contractible loops. 


When defined on a disk, the state $\rho_\myloop$ can be understood as the zero-temperature Gibbs state of the following Hamiltonian:
\begin{align}\label{eq: gibbs_rep_of_rholoop}
\begin{aligned}
    &H = {-\sum_{v} A_v\quad A_v\equiv \eqfig{.9cm}{figs/eqfig_zzzz}},\\
    &\rho_{\myloop}= \rho_{\beta=\infty} = \lim_{\beta\rightarrow\infty} e^{-\beta H} / \tr\bigl(e^{-\beta H}\bigr) .
\end{aligned}
\end{align}
This resembles a gauge theory, except that we are associating the kinetic energy of the gauge fields with vertices instead of plaquettes (on a square lattice this is equivalent to the usual formulation where kinetic energy is associated with plaquettes), and furthermore, we are not imposing the Gauss's law. 

The Gibbs state $\rho_{\myloop}$ possesses a strong 1-form symmetry, $\rho_\myloop = Z_{\overline{\omega}}\rho_\myloop$, and a weak 1-form symmetry, $\rho_\myloop= X_\gamma \rho_\myloop X_\gamma$, where $X_\gamma = {\prod}_{i\in\gamma} X_i$, $Z_{\overline{\omega}}={\prod}_{i\in\overline{\omega}} Z_i$, $\gamma$ and $\overline{\omega}$ represent any closed loop of edges on the lattice and dual-lattice, respectively. Crucially, the weak symmetry is spontaneously broken by $\rho_{\myloop}$ since the operators charged under it, such as $Z_{\overline{C}}$ where $\overline{C}$ is a non-contractible closed loop on the dual lattice, possess long-range correlations; we will elaborate on this point in much more detail in Sec.~\ref{sec: anomaly}. We thus expect $\rho_{\myloop}$ to be in a different phase from the infinite-temperature, maximally mixed ensemble $\rho_{\beta=0} \propto \identity$. 

Another hint pointing to the same conclusion comes from the topological entropy, \textit{i.e.} the subleading term in the von~Neumann entropy\footnote{In the previous literature, and especially in the context of pure states, the subleading term of the von Neumann entropy is usually called ``topological entanglement entropy''\cite{PhysRevLett.96.110404}. Here, we drop the word ``entanglement'' because the entropy of a mixed state can have classical contributions. In fact, $\rho_{\myloop}$ only has those, since it is fully unentangled.}. Comparing the topological entropy $\gamma$ of $\rho_{\myloop}$ with $\rho_{\beta=0}$, we find that the former has $\gamma = \log(2)$, while the latter has $\gamma=0$ \cite{castelnovo2007topological,Castelnovo2007entanglement,neg}. 

Likewise, $\rho_{\myloop}$ is also non-trivial from the perspective of error correction.  The two states $\rho_{\myloop}$, $\rho'_{\myloop}$ serve as a classical topological memory: they encode a classical bit of information in the value ($\pm 1$) of a non-contractible $Z_{\overline \omega}$ loop operator and they are locally indistinguishable, to be made precise shortly. 

However, the expectation that $\rho_{\myloop}$ is non-trivial is not fulfilled by the two-way definition of mixed-state phase equivalence (\textit{i.e.}, Def.\ref{def: two_way}). As we will now show, there exists a pair of $O(1)$-depth channel circuits that relate the product state $\rho_{\beta=0}$ to $\rho_{\myloop}=\rho_{\beta=\infty}$ and vice-versa. See also Appendix \ref{appendix:two-way_different_phases} for a thorough discussion on the phase transitions along both circuit paths.

To obtain $\rho_{\beta=0}$ from $\rho_{\beta=\infty}$, one can evolve the latter using a layer of bit-flip channel:
\begin{equation}
    \calE = \prod_e [\frac{1}{2}X_e(\cdot)X_e + \frac{1}{2}(\cdot)].
\end{equation}
The dynamics independently flips each qubit randomly and can be understood as the infinite-temperature Glauber dynamics of the model. It is straightforward to check that $\calE[\rho_{\myloop}]=\rho_{\beta=0}$.

The other direction ($\rho_{\beta=0}$ to $\rho_{\myloop}$)
is achieved in two steps. First, we turn $\rho_{\beta=0}$ into the product state $\ket{00 \cdots 0}$ by applying a layer of damping channel that turns any state into $\ket{00\cdots 0}$:
\begin{equation}
    \calR_1 = \prod_e [\tr(\cdot)\ketbra{0}{0}_e]
\end{equation}
Next, we apply the plaquette-flip channel
\begin{equation}\label{eq: lindbladian_pflip}
    \calR_2= \prod_p [\frac{1}{2}B_p(\cdot)B_p + \frac{1}{2}(\cdot)]
    \quad
    B_p \equiv \eqfig{.7cm}{figs/eqfig_xxxx},
\end{equation}
which randomly and independently flips edge qubits around each plaquette $p$. The dynamics turns the product state $\sketbra{\bf 0}$ into $\rho_{\myloop}$. We remark that both evolutions are completely classical, despite being represented as quantum channels.

The local channel $\calE$ does not preserve the topological memory (to be formally defined in Sec.\ref{sec: TD_of_rho_loop}) of the loop state, as it maps both $\rho_{\myloop}, \rho'_{\myloop}$ to the same state.  Thus, we see that global reversibility (two-way local channels) is not sufficient to preserve key properties of a phase. The underlying reason is that $\calE,\calR$ are not locally reversible when acting on their input states and do not preserve locality of certain symmetries of $\rho_{\myloop}$ (See Appendix \ref{appendix:local_non-reversibility}).


\subsection{Markov length and a refined definition}\label{sec: markov_length}

To address this shortcoming of Def.\ref{def: two_way}, we introduce a new definition of mixed-state phase that is ``finer'' than the previous one and is motivated by local reversibility.
As we will see in the rest of the work, this new definition characterizes $\rho_{\myloop}$ as being in a non-trivial phase and reveals several universal properties of mixed-state phases that the two-way definition Def.\ref{def: two_way} fails to capture.

We provide some motivation before introducing the definition. In the study of pure-state quantum phases, gapped ground states of local Hamiltonians naturally specify a set of `physical' many-body wavefunctions. The presence of an energy gap implies various universal properties shared by all gapped ground states (\textit{e.g.}, area-law scaling of entanglement and exponential decay of correlation), which play a crucial role in analyzing and classifying pure-state phases.

Physical mixed states, on the other hand, come in diverse types (Gibbs states, Lindbladian steady states, etc.) without an obvious analog of gapped ground states, and it is unclear whether one type includes other types. Therefore, it is more convenient to directly identify properties that should be generally satisfied by ``physical'' mixed states.


The \textbf{Markov length} provides one such property and is defined as follows. 
For a state $\rho$ and a tri-partition of its support $A|B|C$, the quantum conditional mutual information (CMI) $I(A:C|B)$ is defined as:
\begin{equation}
    I_\rho(A:C|B) = S_\rho(AB) + S_\rho(BC) - S_\rho(ABC) - S_\rho(B)
\end{equation}
where each $S_\rho(X)=-\tr(\rho_X\log(\rho_X))$ is the von Neumann entropy of region $X$.
We henceforth focus on the case where $\rho$ is defined on a 2D lattice that may have a non-trivial topology, and $A|B|C$ has a geometry as shown in Fig.\ref{fig: main}(b), where $A$ is a local simply-connected region, $B$ is an annulus-shaped buffer region surrounding $A$, and $C=\overline{B\cup A}$ is the rest of the system. We define $\rho$'s Markov length $\xi(\rho)$ as the smallest length $\xi$ such that:
\begin{equation}\label{eq: markov_length}
    I_\rho(A:C|B)\leq {\rm poly}(L) e^{-{\rm dist}(A, C)/\xi}
\end{equation}
is satisfied for all valid $(A,B,C)$ choices.
To gain intuition, we rewrite CMI as
\begin{equation}
    I_\rho(A:C|B)=I_\rho(A:BC)-I_\rho(A:B),
\end{equation}
where $I(A:B)\equiv S(A)+S(B)-S(AB)$ is the quantum mutual information, a measure of correlation between two parties. In this sense, the Markov length captures how local the correlation is between $A$ and the rest of the system $\overline A=BC$, because CMI being exponentially small in dist$(A,C)$ means that most of the correlation is captured by a $\xi(\rho)$-width buffer region around $A$.

Many important classes of pure and mixed states satisfy the finite Markov length property.
In the case of pure states $\rho=\sketbra{\psi}$, one has the relation:
\begin{equation}\label{eq: CMI_MI}
\begin{aligned}
I(A:C|B) &= \underbrace{S(AB)}_{S(C)}
+ \underbrace{S(BC)}_{S(A)} - \underbrace{S(B)}_{S(AC)} - \underbrace{S(ABC)}_{0\vphantom{)}} \\ &= I(A:C).
\end{aligned}
\end{equation}
For gapped ground states without spontaneous symmetry breaking, $I(A:C)$ usually decays exponentially with respect to ${\rm dist}(A:C)$, leading to a finite Markov length.
Furthermore, Gibbs states of commuting Hamiltonians (at any temperature) have exactly zero Markov length, because $I(A:C|B)$ is exactly zero after dist$(A,C)$ exceeds the interaction range of the Hamiltonian~\cite{brown2012quantum}. We note that the loop state $\rho_{\myloop}$ falls into the latter category (due to Eq.\eqref{eq: gibbs_rep_of_rholoop}) and has exactly zero Markov length. For the opposite behavior, states with strong-to-weak spontaneous symmetry breaking~\cite{Lessa2024, sala_spontaneous_2024, lee_quantum_2023} are notable examples of mixed states with diverging Markov length~\cite{Lessa2024}, even though they can have finite correlation length as defined from mutual information.

Based on the Markov length, we propose the following modified definition of mixed-state phase equivalence:
\begin{definition}[Markov length definition of phase equivalence]\label{def: markov_path}
Two states $\rho_1$ and $\rho_2$ are in the same phase if there exists a local channel circuit $\calC=\calC_T \cdots \calC_2\calC_1$ (each $\calC_t$ is a layer of non-overlapping local channel gates) such that:
\begin{align}
    \calC[\rho_1] \eqL \rho_2,
\end{align}
and, further, for any time $t$ and any $\calC'_t \subseteq \calC_t$ being a layer composed of a subset of gates in $\calC_t$,
$\calC'_{t}\calC_{t-1}\cdots\calC_1[\rho_1]$
has finite Markov length, \textit{i.e.}, $\xi(\rho_t)\leq\xi_{0}$ some $\xi_0=O(1)$ independent of $t$.
\end{definition}


It was shown in Ref.~\cite{sang2024stability} that two states being equivalent under Def.~\ref{def: markov_path} implies their equivalence under Def.~\ref{def: two_way}. More precisely, given the conditions in the definition above, one can construct a quasi-local channel circuit that reverses the action of $\calC$, which further satisfies local reversibility. This is because a local operation $\calE$'s influence on a finite-Markov-length state can be approximately reversed locally by another operation $\tilde\calE$ \cite{junge_universal_2018}, as depicted in Fig.~\ref{fig: circuit}(a).
Therefore, if we write the channel circuit $\calC$ (`forward circuit') as:
\begin{equation}
\begin{aligned}
    \calC=\calC_n \cdots \calC_1 \quad{\rm with}\quad
    \calC_t
    &={\prod}_x \calE_{t,x}
\end{aligned}
\end{equation}
then we can construct the following `reversal circuit':
\begin{equation}
\begin{aligned}
    \tilde\calC=\tilde\calC_1 \cdots \tilde\calC_n \quad
    {\rm with}
    \quad
    \tilde\calC_t
    &={\prod}_x \tilde\calE_{t,x}
\end{aligned}
\end{equation}
that (approximately) undoes $\calC$'s influence on $\rho_1$ in a gate-by-gate fashion, as long as the evolving state's Markov length remains finite throughout the evolution.

Due to the gate-by-gate or local reversibility, the forward and the reversal evolutions $\calC$ and $\tilde\calC$ are more special than a pair of evolutions satisfying only Def.~\ref{def: two_way}. In particular, they are \textit{localizable} and \textit{locality preserving}, as explained below.

{\bf Localizable}: Local reversibility implies that $\rho_1$ can be converted to and recovered from $\rho_2$ on different regions sequentially, as detailed below.  Suppose $\calC$ satisfies Def.~\ref{def: markov_path}.
For a partition of the lattice into $A$ and $B$ ($\Lambda = A \sqcup B$), one can partition $\calC$ and $\tilde\calC$ accordingly (as illustrated in Fig.~\ref{fig: circuit}(b, right)):
\begin{equation}
    \calC = \calC_A\calC_B ,
    \quad\quad
    \tilde\calC = \tilde\calC_B \tilde\calC_A ,
\end{equation}
such that $\calC_{A}$ is supported on $A$ together with the region near the boundary separating $A$ and $B$; and likewise for $\calC_B$.
Referring to figure~\ref{fig: circuit}(b), we can easily verify that:
\begin{equation}\label{eq: cancellation}
    \tilde\calC_A\calC_A\calC_B[\rho_1] = \calC_B[\rho_1],
    \quad\quad
    \tilde\calC_B\calC_B[\rho_1] = \rho_1
    ,
\end{equation}
which is also illustrated visually as follows:
\begin{equation*}
    \eqfig{3cm}{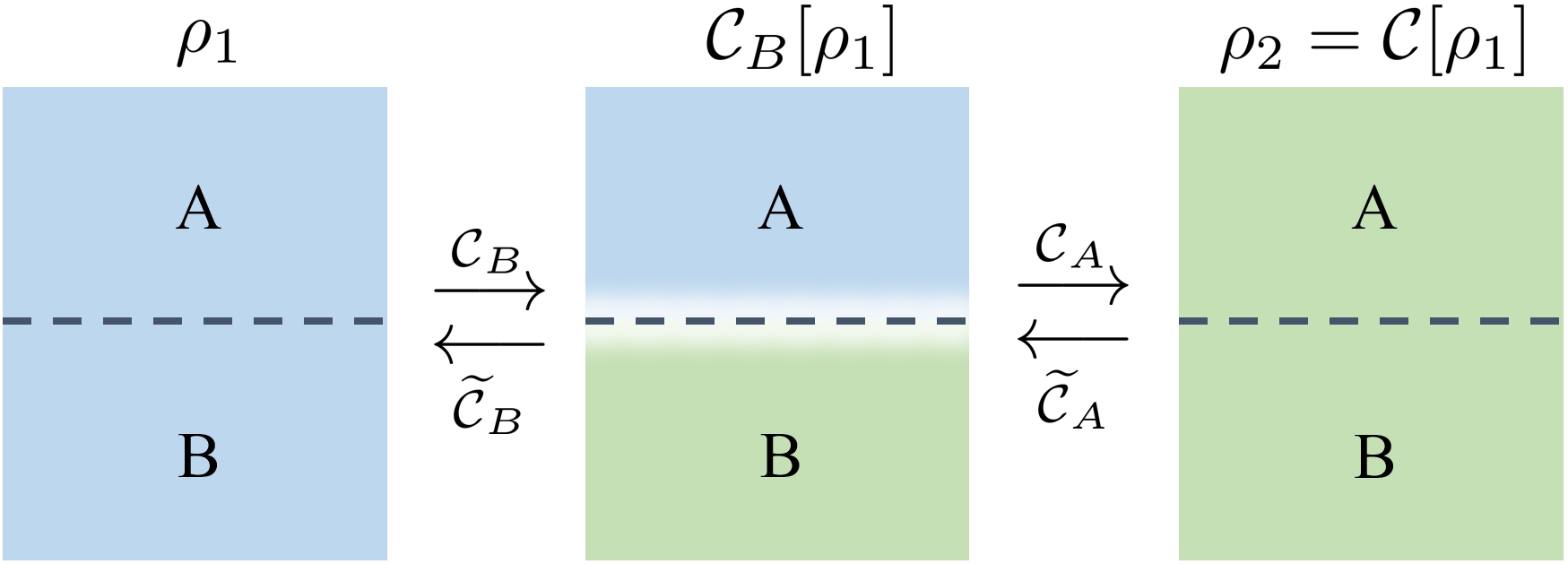}
\end{equation*}
In each panel, in the interior of the blue (green) region, the state looks $\rho_1$ ($\rho_2$). In the second panel, around the boundary region between $A$ and $B$ the state may be different from $\rho_1$ and $\rho_2$.

{\bf Locality preserving}: $\calC$ (and $\tilde\calC$) relate quantum operations on $\rho_1$ and those on $\rho_2$ in a locality-preserving fashion. Let $\calD$ be a quantum channel acting on $\rho_2$. It induces a channel $\tilde\calC\calD\calC$ acting on $\rho_1$, and its action on $\rho_1$ can be mimicked by another channel, $\calD^{\calC}$ (henceforth termed $\calD$ dressed by $\calC$), 
that is localized around the support of $\calD$:
\begin{equation}\label{eq: dressed_operation}
\begin{aligned}
    \tilde\calC\calD\calC[\rho_1] 
    &= \tilde\calC_{X}\tilde\calC_{\overline X}\calD\calC_{\overline X}\calC_{X}[\rho_1]\\
    &=\tilde\calC_{X}\calD\calC_{X}[\rho_1]
    \equiv
    \calD^{\calC}[\rho_1]
\end{aligned}
\end{equation}
where $X$ denotes the (slightly enlarged) spatial support of $\calD$. Equation \eqref{eq: dressed_operation} holds because the gates in $\tilde\calC_{\overline{X}}$ and $\calC_{\overline{X}}$ cancel each other due to Eq.\eqref{eq: cancellation}. The construction of $\calD^{\calC}$ is also illustrated graphically in Fig.\ref{fig: circuit}(c). We remark that in the special case that $\calC$ and $\tilde\calC$ are unitary circuits (\textit{i.e.} $\calC[\cdot]=U\cdot U^\dagger$), we have $\calD^\calC[\cdot] = U^\dagger\calD [U\cdot U^\dagger] U$, which manifestly has the same locality as $\calD$, just on a enlarged support. While for local channel evolution, we need the extra assumption of local reversibility to find a local $\calD^\calC$.

In the following sections, we will see that these properties make Def.~\ref{def: markov_path} a finer definition than Def.~\ref{def: two_way} and in particular, they distinguish $\rho_\myloop$ as a non-trivial phase.

\section{Topological degeneracy (TD) of mixed states}\label{sec: TD}
Topological degeneracy (TD) is a hallmark of zero-temperature quantum topological order. When defined on a topologically non-trivial lattice, a topologically ordered quantum Hamiltonian has a ground-state subspace $\calH_{G.S.}$ whose dimension depends on the lattice topology. Furthermore, different states within $\calH_{G.S.}$ are indistinguishable using any local observables.

Mixed states of interest are not necessarily associated with any Hamiltonian's low-energy states. Thus, when generalizing TD to mixed states, it is most natural to use local indistinguishability as the starting point (similar notions for mixed-states were proposed in \cite{li2024replica, ellison2024towards}). 

\begin{figure*}
    {(a)}\\
    \includegraphics[width=0.9\linewidth]{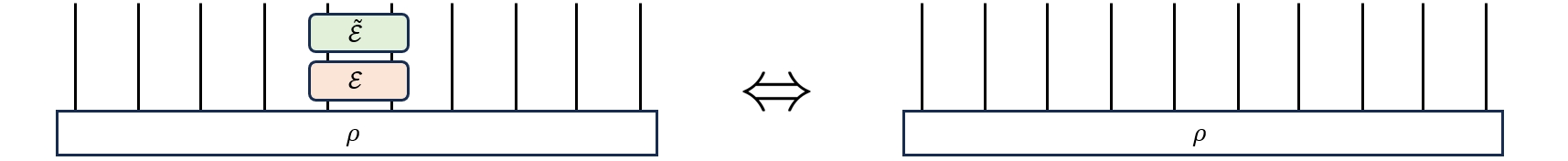}\\
    \vspace{.6cm}
    {(b)}\\
    \includegraphics[width=0.9\linewidth]{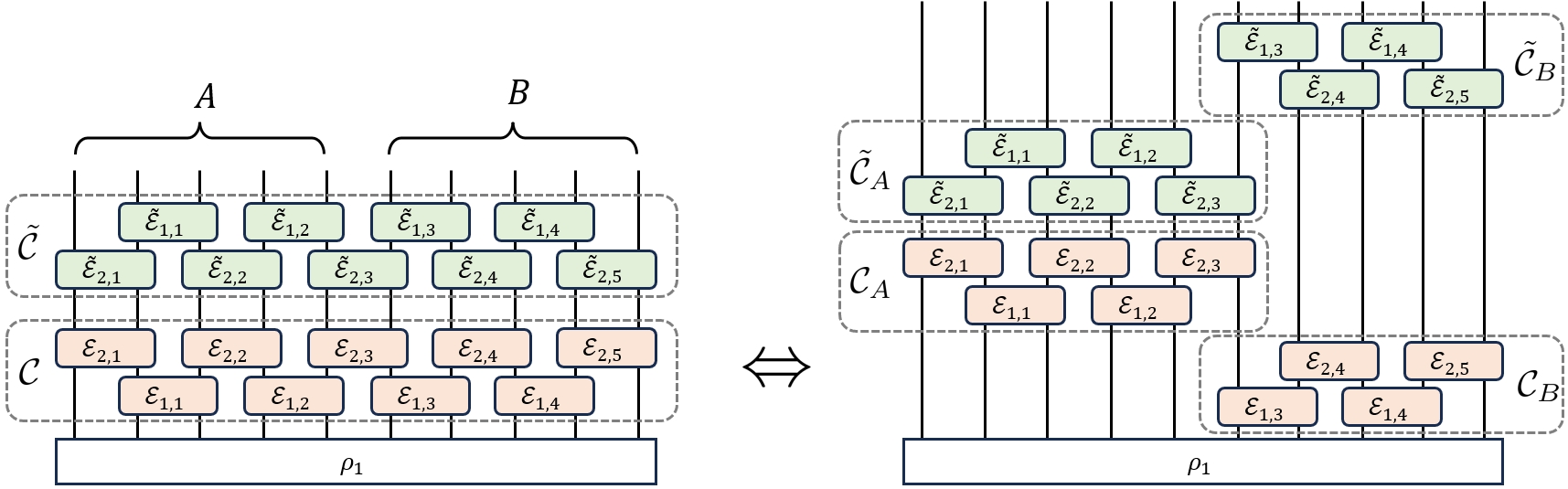}\\
    \vspace{.6cm}
    {(c)}\\
    \includegraphics[width=0.9\linewidth]{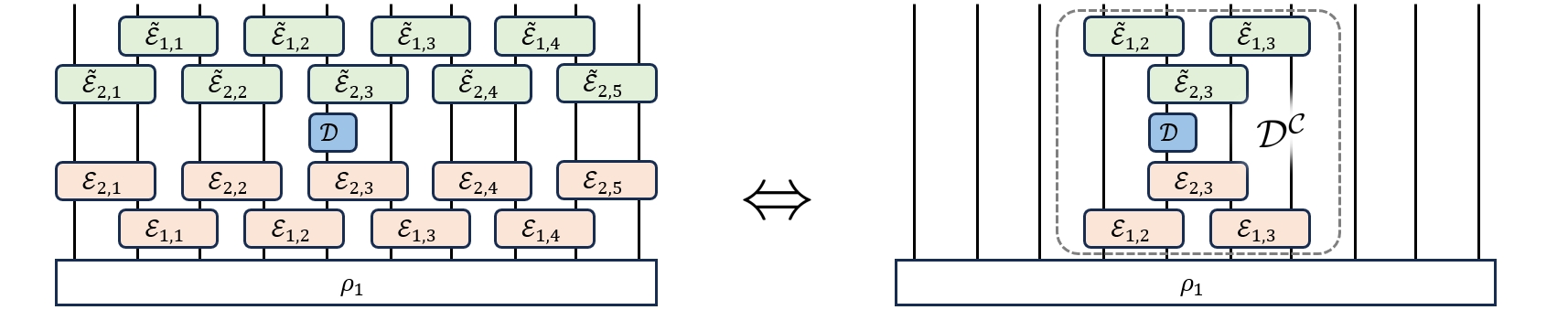}
    \caption{ \label{fig: circuit}
    (a) Illustration of local reversibility of a single gate. We say a channel gate $\calE$ (red box) is locally reversible with respect to the input state $\rho$ if there exists another channel $\tilde\calE$ (green box) localized around $\calE$ such that $\tilde\calE\calE\rho=\rho$. Each leg represents both bra and ket Hilbert spaces. If $\rho$ has a finite Markov length $\xi(\rho)$, then any local channel gate $\calE$ is locally reversible with respect to it.
    (We remark that $\tilde\calE$'s support is usually larger than $\calE$'s (see~\cite{sang2024stability} for more details). Here we draw them to be of the same size for the sake of illustration.)
    (b) \textit{l.h.s.} depicts a two-layer circuit $\calC$ and its local reversal circuit $\tilde\calC$. The latter exists if $\rho_1$'s Markov length remains finite at any time under the evolution of $\calC$. Each gate $\calE_{t,x}$ (in $\calE$) and its corresponding $\tilde\calE_{t,x}$ satisfy the local reversibility condition illustrated in (a) upon the state at the corresponding layer. Thanks to this gate-by-gate reversal property, for any spatial bi-partition into $A$ and $B$, one can partition the circuit accordingly (\textit{r.h.s.}) such that Eq.\eqref{eq: cancellation} is satisfied.
    (c) Due to the local reversibility, action of a channel $\tilde\calC\calD\calC$ upon $\rho_1$ (\textit{l.h.s}) can be mimicked by another channel $\calD^{\calC}$ whose support is slightly enlarged.
 }
\end{figure*}

\subsection{Mixed-state topological degeneracy (TD)}
\begin{definition}[Locally indistinguishable set]\label{def: LI_set}
For a mixed state $\rho$ defined on a region $\Lambda$, its locally indistinguishable set $\calQ(\rho;\xi_0)$ is the set of states that are identical to $\rho$ on any topologically trivial sub-region and have a Markov length no greater than $\xi_0$:
\begin{equation}
\begin{aligned}
\calQ(\rho;\xi_0) \equiv 
\{
&\sigma:\ \xi(\sigma)\leq\xi_0,\ \ \sigma_A\eqL\rho_A\\
&\text{for all simply connected } A\subseteq\Lambda
\}
\end{aligned}
\end{equation}
\end{definition}
The definition resembles the definition of the information convex set in entanglement bootstrap (see \textit{e.g.} Def C.1 in~\cite{SHI2020168164}). 
A few remarks about the definition:
\begin{enumerate}[leftmargin=.5cm]
\item The finite-Markov-length constraint is necessary for the definition to be meaningful.
As we will illustrate with the $\rho_\myloop$ example, the constraint removes `unphysical' locally indistinguishable states. These unphysical degeneracies are unstable and do not carry over to other states in the same phase.
We expect that $\calQ(\rho;\xi_0)$'s dependence on $\xi_0=O(1)$ is not important in the thermodynamic limit when $\xi_0$ is chosen to be sufficiently large (in particular, $\xi_0$ needs to be larger than $\xi(\rho)$). We henceforth assume that $\xi_0$ is always chosen to be large enough and drop the $\xi_0$ within $\calQ(\rho;\xi_0)$. 
\item $\calQ(\rho)$ defines a convex set, \textit{i.e.}, if $\sigma_1, \sigma_2 \in \calQ(\rho)$, then $\forall p\in[0,1], \sigma_3\equiv p\sigma_1+(1-p)\sigma_2\in\calQ(\rho)$. To see this, it suffices to check that $\sigma_3$ is locally indistinguishable from $\rho$ and has a finite Markov length. The former follows from: $\sigma_{3,A}=p\sigma_{1,A}+(1-p)\sigma_{2,A}=\rho_A $ for any $A$. To verify the latter, we notice that: $I_{\sigma_3}(A:C|B) = S_{\sigma_3}(A|B)-S_{\sigma_3}(A|BC)$, where $S(X|Y)\equiv S(XY)-S(Y)$ is the quantum conditional entropy. Since $A\cup B$ is a simply connected region in the definition of Markov length, $S_{\sigma_3}(A|B)\eqL S_{\sigma_{1,2}}(A|B)$. We cannot treat $S(A|BC)$ similarly because $C$ is not simply connected in general (\textit{e.g.}, when $ABC$ forms a torus). We instead apply the concavity of conditional entropy: $S_{\sigma_3}(A|BC)\geq p S_{\sigma_1}(A|BC) +(1-p) S_{\sigma_2}(A|BC)$, which gives us $I_{\sigma_3}(A:C|B)\leq p I_{\sigma_1}(A:C|B) +(1-p) I_{\sigma_2}(A:C|B)$, for sufficiently large $L$. Thus, $\xi(\sigma_3)$ is finite and bounded by $\max\{\xi(\sigma_1), \xi(\sigma_2)\}$.
\item Reflexivity: If $\rho\in\calQ(\sigma)$, then $\sigma\in\calQ(\rho)$.
\item A state $\sigma$ belongs to $\calQ(\rho)$ if and only if for any simply connected region $A$ away from $\Lambda$'s boundary, there exists a channel $\calD$ acting on $\overline A$ only satisfying $\calD[\rho]=\sigma$. To prove the `if' claim, we let $B$ be a width $r$ buffer region surrounding $A$, and $C$ be the rest of the system. Then by definition of $\calQ(\rho)$ we have $\rho_{AB} = \sigma_{AB}$. Furthermore, since $\sigma$ have finite Markov length $\xi$, when $r\gg \log L$ one can find a channel $\calR_{B\rightarrow BC}$ such that $\calR_{B\rightarrow BC}(\sigma_{AB})=\sigma$~\cite{junge_universal_2018}. Therefore letting $\calD=\calR_{B\rightarrow BC}\circ\tr_{C}$ fullfills our goal. To prove the `only if' claim, we first notice that since $\calD$ can be pushed out of any local region $A$, $\sigma$ and $\rho$ must be indistinguishable on $A$. Further, since $I_{\calD[\rho]}(A:C|B)\leq I_{\rho}(A:C|B)$ for any $\calD$ acting on $D$ only, we have $\xi(\sigma)\leq\xi(\rho)$. 

\end{enumerate}

We say a mixed state $\rho$ defined on $\Lambda$ \footnote{We assume $\rho$ can be consistently defined on any closed manifold $\Lambda$ of the same dimension. The $\rho_\myloop$ satisfies this condition.} has \textbf{topological degeneracy} (TD) if the shape of its $\calQ(\rho)$ as a convex set depends non-trivially on the topology of $\Lambda$.
Mixed-state TD generalizes the well-studied ground state degeneracy of quantum topological order. For instance, consider the 2D toric code Hamiltonian:
\begin{equation}
    H_{\rm t.c.} = -{\sum}_{v}A_v - {\sum}_p  B_p
\end{equation}
where $A_v$ and $B_p$ were defined in Eq.\eqref{eq: gibbs_rep_of_rholoop} and Eq.\eqref{eq: lindbladian_pflip}.
When defined on a 2D closed manifold with genus $g$, the Hamiltonian has a degenerate ground state subspace $V$ with a Hilbert space dimension of $2g$. As we will show, the locally indistinguishable set $\calQ(\ketbra{\psi}{\psi})$ of any ground state $\ket{\psi}\in V$ is given by:
\begin{equation}
    \calQ(\ketbra{\psi}{\psi}) = \conv\{\ketbra{\phi}{\phi}:\ \ket{\phi}\in V\},
\end{equation}
which is the set of all density matrices supported on the subspace $V$. To prove this claim, we first notice that all states within $A$ are identical on any simply connected region $A$, \textit{i.e.}, $\tr_{\overline A}(\sketbra{\psi})$ is independent of $\ket{\psi}$. Second, to verify that the states have a finite Markov length, it suffices to check it for pure states $\sketbra{\psi}$ within $\calQ$. As we remarked earlier, for pure states, we have $I(A:C|B) = I(A:C)=0$. Therefore, all states within $\calQ(\sketbra{\psi})$ have zero CMI and thus zero Markov length.

Non-topologically-ordered degenerate ground states, \textit{e.g.} those of spontaneous symmetry breaking of a 0-form symmetry, do not have non-trivial $\calQ$ because different states in the subspace can be distinguished locally through an order parameter.

\subsection{TD of the classical loop state}\label{sec: TD_of_rho_loop}
We now study the topological degeneracy of the classical loop state $\rho_\myloop$ defined on lattice $\Lambda$ being a torus.

To understand the structure of $\calQ(\rho_\myloop)$, we can draw insights from its similarity to the toric code state $\ket{\rm t.c.}$. While $\rho_\myloop$ is an incoherent mixture of loop configurations, $\ket{\rm t.c.}$ is their coherent superposition. In the toric code, distinct ground states are characterized by the parity of non-contractible loops around the two holes of the torus. This suggests that a similar mechanism might generate a non-trivial $\calQ(\rho_\myloop)$, which indeed proves to be the case.

We define four loop states with definite winding number parity:
\begin{equation}
    \rho_{\myloop}^{s_x s_y} 
    \propto
    \sum_{\bs\in{\rm loops}}\delta(f_x(\bs)=s_x, f_y(\bs)=s_y)\sketbra{\bs}
\end{equation}
where $f_{x/y}(\bs)$ measures the parity (modulo 2) of non-contractible loops around the x/y direction, and $s_x, s_y\in\{0, 1\}$. The original loop state $\rho_\myloop$ is just $\rho_{\myloop}^{00}$.

To demonstrate that each $\rho_{\myloop}^{s_x s_y}$ belongs to $\calQ(\rho_\myloop)$, we employ the stabilizer formalism. Given a set $\calG=\{g_1,...,g_m\}$ of commuting and (multiplicatively) independent $n$-qubit Pauli operators, its corresponding stabilizer state $\rho$ is defined as the maximally mixed state in the simultaneous $+1$ eigensubspace of each $g_i$:
\begin{equation}
    \rho \propto\prod_{g_i\in\calG}\left(\frac{1+g_i}{2}\right)
\end{equation}

Here, each $g_i\in\calS$ is called a stabilizer generator (or simply stabilizer) of $\rho$. The generating set of a stabilizer state is not unique. For example, $\calG'=\{g_1 g_2, g_2, g_3, ..., g_m\}$ defines the same stabilizer state as $\calG$. Two generating sets $\calG$ and $\calG'$ define the same state if and only if they generate the same abelian group of Pauli operators, known as the stabilizer group of $\rho_\calG$.

Each $\rho_\myloop^{s_x, s_y}$ is a stabilizer state. They share a common set of vertex stabilizers (see Eq.\eqref{eq: gibbs_rep_of_rholoop} for the definition of $A_v$) 
     $\{A_v\}_{v} $
which enforce the loop constraint at each vertex\footnote{One vertex must be excluded from this set due to the dependency relation $\prod_v A_v = 1$}. Additionally, each $\rho_\myloop^{s_x s_y}$ has two stabilizers that fix the parities of non-contractible loops in the $x$ and $y$ directions:
\begin{equation}
    \calG(\rho_{\myloop}^{s_x s_y}) =\{A_v\}_{v}  \cup \{(-1)^{s_x} Z_{\overline{\omega}_{x}}, (-1)^{s_y} Z_{\overline{\omega}_{y}}\}
\end{equation}
where $\overline{\omega}_{x/y}$ represents any non-contractible loop along the $x$/$y$ direction on the dual lattice (see Fig.\ref{fig: stabilizer} for illustration). The operator $Z_\gamma = \prod_{i\in\gamma} Z_i$ measures the 'flux' of loops across a closed loop $\gamma$. Different choices of $\gamma$ with the same topology yield equivalent states, as $\gamma$'s shape can be modified by multiplying with vertex stabilizers $A_v$.

It is straightforward to verify that all $\rho_\myloop^{s_x s_y}$ states are indistinguishable on any simply connected region $A$. We now examine the state's Markov length. For a stabilizer state $\rho$, the conditional mutual information $I(A:C|B)$ is given by:
\begin{equation}
\begin{aligned}
    I(A:C|B) = \min_{\calG(\rho)}|\{g\in \calG(\rho),
    \text{support of $g$}\\ \text{ intersects $A$ and $C$}\}|
\end{aligned}
\end{equation}
where the minimum is taken over all equivalent choices of $\calG(\rho)$~\cite{sanghybrid}. For a partition $\Lambda=A\cup B\cup C$ with the topology shown in Fig.\ref{fig: stabilizer}, local $A_v$ terms do not contribute to $I(A:B|C)$ when $B$'s width exceeds a few lattice spacings. For the non-local stabilizers $Z_{\overline{\omega}_{x,y}}$, we can always choose paths $\gamma$ that are supported only in $C$, as only their topology matters. Therefore, $\xi(\rho_{\myloop}^{s_x s_y})=0$ and
\begin{equation}
    \mathsf{conv}\{\rho_{\myloop}^{00}, \rho_{\myloop}^{01}, \rho_{\myloop}^{10}, \rho_{\myloop}^{11}\} \subseteq \calQ(\rho_{\myloop})
\end{equation}

\begin{figure}[h!]
    \centering
    \includegraphics[width=0.5\linewidth]{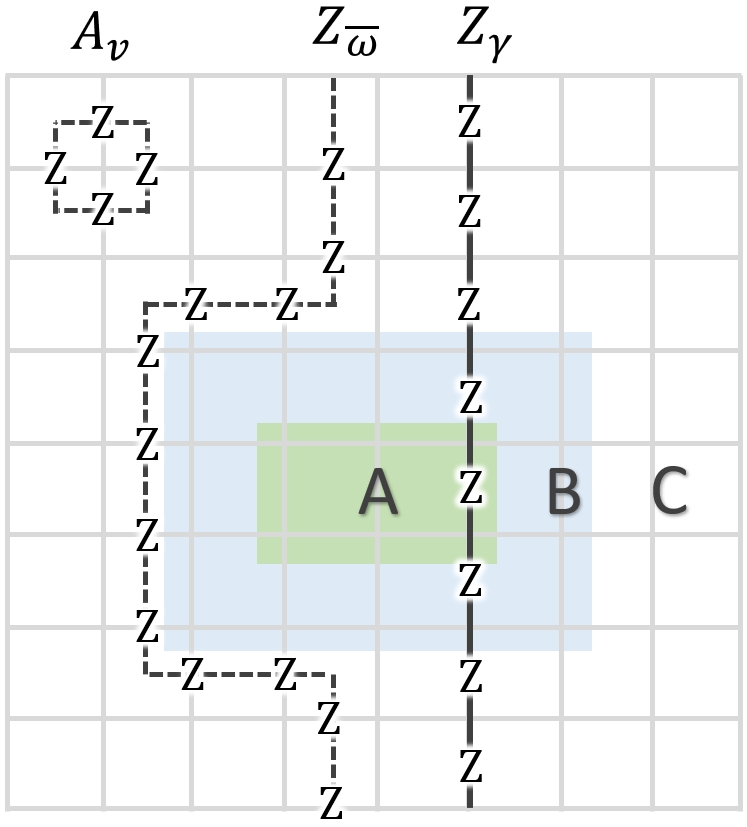}
    \caption{Construction of a stabilizer state $\sigma$  belonging to $\calQ(\rho_{\myloop})$.}
    \label{fig: stabilizer}
\end{figure}

We now argue that this inclusion is an equality. 
Consider any stabilizer state in $\calQ(\rho_{\myloop})$, which must be locally indistinguishable from $\rho_\myloop$ and have finite Markov length. All $A_v$ terms must be retained as stabilizers; otherwise, the loop constraint would be violated at some vertices, contradicting local indistinguishability 
Moreover, no new stabilizer whose support lies within a simply connected region $X$ can be introduced, as this would alter $\rho_X$, again violating local indistinguishability.

Therefore, to satisfy local indistinguishability, we must add stabilizers with non-simply-connected support. While there are many possible choices, such as Pauli operators of the form $\prod_{i\in \kappa}Z_\sigma$ where $\kappa$ is non-simply-connected, most of these choices increase the state's CMI and violate the finite-Markov-length condition.

Consider, for example, adding $Z_{\gamma}$ where $\gamma$ is a non-contractible loop on the \textit{primal} lattice (not the dual lattice) as shown in Fig.\ref{fig: stabilizer}. Unlike $Z_{\overline\omega}$, the shape of $Z_\gamma$ cannot be modified by multiplying with $A_v$ terms. Consequently, for any region $A$ that $\gamma$ crosses (as illustrated in Fig.\ref{fig: stabilizer}), $Z_\gamma$ increases the state's $I(A:C|B)$ by 1 for any $B$, resulting in an infinite Markov length $\xi$. This argument extends to all non-simply connected stabilizers wrapping around the torus, except for the `topological' operators $Z_{\overline{\omega}}$ that can be deformed by multiplying with $A_v$ terms.

This analysis generalizes to $\rho_\myloop$ on closed surfaces with higher genus $k$. The corresponding locally indistinguishable set $\calQ(\rho_{\myloop})$ forms a simplex with $2^{2k}$ extremal points, with its structure determined solely by the surface topology. Each extremal point corresponds to a loop ensemble with fixed winding number parity around each 'hole' of the surface (a genus-$k$ surface has ${2k}$ inequivalent holes).


\subsection{TD as a phase invariant}

We now examine whether topological degeneracy (TD) is a universal property shared by all states within a mixed-state phase.

We notice that if one adapts the two-way definition of mixed-state phase (Def.\ref{def: two_way}), the answer is clearly negative: on a torus $\rho_\myloop$ and $\ket{\bf 0}$ belong to the same phase using this definition (Sec.\ref{sec: defs}), while the latter does not display TD (\textit{i.e.} the product state's locally indistinguishable set contains only itself).

We will now demonstrate that, under the Markov length definition of phase equivalence (Def.\ref{def: markov_path}) TD is a phase invariant. More precisely, let $\calC[\rho]$ be any state in the same phase as $\rho$ with $\calC$ being the circuit that keeps the Markov length finite throughout, then $\calC$ and $\tilde\calC$ establish a one-to-one correspondence between states in $\calQ(\rho)$ and those in $\calQ(\calC[\rho])$:
\begin{equation}\label{eq:TD_invariance_comm_diagram}
\calQ(\rho)\overunderset{\calC}{\tilde\calC}{\rightleftharpoons}\calQ(\calC[\rho])
\end{equation}

Consider a local channel circuit $\calC=\calC_n...\calC_1$ acting on $\rho_0 \equiv\rho$ that keeps its Markov length finite. By definition, all $\rho_t \equiv \calC_t[\rho_{t-1}]$ states are in the same phase, and (recall discussion in Sec.\ref{sec: markov_length}) there is another channel circuit $\tilde\calC=\tilde\calC_1...\tilde\calC_n$ that reverses $\calC$'s action on $\rho$.

We first prove that for any $\sigma_0\in\calQ(\rho)$,  evolving $\sigma_0$ using $\calC$ also keeps its Markov length finite throughout the evolution. It suffices to check this for a single time step $\calC_1$. The evolved state can be written as:
\begin{equation}
        \sigma_1 
        \equiv
        \calC_{1}[\sigma_0]
        =\calC_{1}\calD[\rho_0]
        = \calC_{1}\calD\tilde\calC_1\calC_1[\rho_0]
        = \calD^{\calC_1}[\rho_1]
\end{equation}
where $\calD$ is a quantum channel satisfying $\sigma_0=\calD[\rho_0]$ (see comment 4 underneath Def.\ref{def: LI_set}).  For any partition $\Lambda=A\cup B\cup C$ following the geometry in Fig.\ref{fig: main}(b), we can choose $\calD$'s (and consequently $\calD^{\calC_1}$'s) support to lie entirely within $C$ (see the comment 4 beneath Def.\ref{def: LI_set}). 
By the data processing inequality \footnote{We comment that the inequality can be enhanced to an equality, since there also exists another channel $\calD'$ satisfying $\rho_0=\calD'[\sigma_0]$. For our purpose only the inequality is needed.}:
\begin{equation}\label{eq: cmi_mono}
    I_{\calD^{\calC_1}[\rho_1]}(A:C|B)\leq I_{\rho_1}(A:C|B)
\end{equation}
holds for any $A\cup B\cup C$ partition, implying $\xi(\sigma_1)\leq\xi(\rho_1)$. Iteratively applying this argument to the full evolution $\calC_{n}...\calC_1$, we obtain:
\begin{equation}
    \xi(\sigma_t)\leq\xi(\rho_t)\ \text{with}\ \sigma_t=\calC_t...\calC_1[\sigma_0]\ \ \forall\sigma_0\in \calQ(\rho_0)
\end{equation}

This condition, combined with the local indistinguishability of $\rho_t$ and $\sigma_t$, ensures that $\tilde\calC$ reverses $\calC$'s action on $\sigma_0$:
\begin{equation}
    \tilde\calC\calC[\sigma_0] = \sigma_0
\end{equation}

Since this holds for any $\sigma_0\in\calQ[\rho_0]$, $\calC$ (and $\tilde\calC$ ) establish an isometric bijection between $\calQ(\rho_0)$ and $\calQ(\calC[\rho_0])$. Specifically, for any pair of states $\sigma, \sigma'\in\calQ(\rho)$, and for $D$ being any distance measure of quantum states satisfying monototicity under quantum operation (\textit{e.g.} trace distance, relative entropy, fidelity distance), we have:
\begin{equation}
\begin{aligned}
&D\left(\calC[\sigma], \calC[\sigma']\right)
\leq 
D\left(\sigma, \sigma'\right)\\
= 
&D\left(\tilde\calC\calC[\sigma], \tilde\calC\calC[\sigma']\right)
\leq
D\left(\calC[\sigma], \calC[\sigma']\right)\\
\Rightarrow\quad\quad&
D\left(\calC[\sigma], \calC[\sigma']\right) = D\left(\sigma, \sigma'\right)
\end{aligned}
\end{equation}
where both inequalities follow from the monotonicity.

We now apply the result above to the classical loop state defined on a torus.
Since $\calQ[\rho_\myloop]$ in this case is a tetrahedron, $\calQ(\calC[\rho_\myloop])$ must also be a tetrahedron due to linearity of quantum channels. This geometric preservation excludes the possibility that $\rho_\myloop$ and the product state $\ket{0...0}$ belong to the same phase on a torus, as the latter has a trivial $\calQ$ containing only itself. 

We remark that this phase-inequivalence proof assumes that the total lattice has a non-zero genus (otherwise both $\calQ(\rho_{\myloop})$ and $\calQ(\sketbra{0..00})$ are trivial, thus the no-go condition does not hold). In Sec.\ref{sec: anomaly}, we present another proof that applies to $\rho_\myloop$ defined on zero-genus manifold, \textit{e.g.} sphere and disk.

\subsection{Preservation of memory encoded in the TD and coherent information}

As an application of the results above, we now consider the scenario in which the TD of $\calQ(\rho)$ serves as memory for some encoded information, which is then subjected to noise. We formalize this scenario as follows: a TD memory of $\rho$ is a triple $(\mathcal{M}, \calE, \calD)$, where $\calM$ is a convex set of logical (mixed) states that we want to encode; an encoding channel $\calE : \calM \to \calQ(\rho)$; and a decoding channel $\calD : \calQ(\rho) \to \calM$ satisfying $\calD \circ \calE = \identity_{\calM}$. $\calM$ can be both classical (a simplex) or quantum (e.g. Bloch ball for a qubit).

Given the definition above, the preservation of the TD memory of $\rho$ is a direct consequence of the phase invariance of the locally indistinguishable set $\calQ(\rho)$, argued in Sec. \ref{sec: TD_of_rho_loop} for the classical loop state. Indeed, if $\calC(\rho)$ is another state in the same phase as $\rho$, then it has a TD memory $(\calM, \calC \circ \calE, \calD \circ \tilde\calC)$ due to the following commuting diagram (See Eq. \ref{eq:TD_invariance_comm_diagram}):
\begin{equation}
\calM 
\overunderset{\calE}{\calD}{\rightleftharpoons}
\calQ(\rho)
\overunderset{\calC}{\tilde\calC}{\rightleftharpoons}
\calQ(\calC[\rho])
\end{equation}

For a quantum memory $\calM$, the amount of quantum information preserved after a channel $\calC$ acts on a state $\sigma \in Q(\rho)$ can be captured by the coherent information $I(\sigma, \calC)$ \cite{nielsen_quantum_2010}. It is defined as $I(\sigma, \calC) = S(Q') - S(RQ')$, where $\sigma$ is purified to an extended system $RQ$, which is then acted upon by $\identity_R \otimes \calC_Q$ to become $RQ'$. Hence, it measures the amount of correlations still present between $R$ and $Q$. By the data processing inequality, the coherent information achieves its maximum value $S(\sigma)$ if, and only if, the state can be perfectly recovered, which is precisely the case if $\calC$ is a locally reversible channel. For the pure toric code case, this conclusion agrees with previous explicit calculations of the coherent information \cite{fan2023diagnostics, lee_exact_2025}.

\section{Spontaneous breaking of weak 1-form symmetry}\label{sec: anomaly}

For quantum topological orders, topological degeneracy arises from the existence of anyons and their non-trivial braiding relations. In modern terminology, quantum topological order (in 2D) results from spontaneous breaking of 1-form symmetries—symmetry operators that take the form of a loop.

In this section, we demonstrate that the non-trivialness of $\rho_\myloop$ and its corresponding mixed-state phase emerge from a similar mechanism: the breaking of a weak 1-form symmetry, and its corresponding fractionalized excitations. Remarkably, the symmetry action takes the form of a quantum channel when away from the fixed-point state. 

The classical loop state exhibits the following symmetries (recall that $X_\gamma = {\prod}_{i\in\gamma} X_i$ and $Z_{\overline{\omega}}={\prod}_{i\in\overline{\omega}} Z_i$):
\begin{equation}
\begin{aligned}
    \rho_\myloop &= \calU_\gamma[\rho_\myloop] \equiv X_\gamma \rho_\myloop X_\gamma \\
    \rho_\myloop &= Z_{\overline{\omega}} \rho_\myloop
\end{aligned}
\end{equation}
where $\gamma$ and $\overline{\omega}$ represent any closed loop of edges on the lattice and dual-lattice, respectively. Note that $X_\gamma$ must act on both sides to preserve the density matrix, while $Z_{\overline{\omega}}$ needs to act only on one side. In the literature, these are termed a \textit{weak 1-form symmetry} and a \textit{strong 1-form symmetry}, respectively.


The strong and weak $1$-form symmetries of $\rho_\myloop$ exhibit a 't Hooft anomaly in the following sense. When we truncate a loop $\gamma$ into an open string $\gamma_{ab}$ between vertices $a$ and $b$, $\calU_{\gamma_{ab}}$ creates two non-local point-like excitations at vertices $a$ and $b$ by flipping the signs of $A_a$ and $A_b$. The resulting state becomes a mixture of closed loops with singular endpoints at $a$ and $b$.

These excitations are non-local in the sense that no operation localized near $a$ and $b$ can remove them to restore $\rho_\myloop$ from $\calU_{\gamma_{ab}}[\rho_\myloop]$. To demonstrate this, consider any dual loop $\overline{\omega}$ that encircles only vertex $a$ while remaining well-separated from both $a$ and $b$. We find:
\begin{equation}
    Z_{\overline{\omega}}\cdot\calU_{\gamma_{ab}}[\rho_{\myloop}]
    =-\calU_{\gamma_{ab}}[\rho_{\myloop}]
    \label{eq:EManomaly}
\end{equation}
This follows from the fact that $\gamma_{ab}$ and $\overline\omega$ intersect once, implying $X_{\gamma_{ab}}Z_{\overline{\omega}}=-Z_{\overline{\omega}}X_{\gamma_{ab}}$. The relation Eq.~\eqref{eq:EManomaly} is the manifestation of the 't Hooft anomaly and is also known in the literature as  braiding statistics.

The anomaly structure precludes the possibility of locally removing the excitations around $a$ and $b$, as any such local operation cannot alter $Z_{\overline{\omega}}$'s expectation value from $-1$ back to $\tr(Z_{\overline{\omega}} \rho_\myloop)=+1$, given that $\overline\omega$ is distant from both vertices. Furthermore, only the topology of $\gamma_{ab}$ is relevant: if $\gamma'_{ab}$ is another open string connecting $a$ and $b$, then $\calU_{\gamma'_{ab}}[\rho_{\myloop}]=\calU_{\gamma_{ab}}[\rho_{\myloop}]$.

It is also useful to view $X_\gamma$ as a weak symmetry operation on the state and $Z_{\overline{\omega}}$ as an observable, where $\tr Z_{\overline{\omega}}\rho_\myloop=1$. The anomaly structure then implies that the weak $1$-form symmetry $X_\gamma$ is spontaneously broken, with $Z_{\overline{\omega}}$ being the loop order parameter. Alternatively, one can view $Z_{\overline{\omega}}$ as a strong symmetry operation and $X_\gamma$ as a ``fidelity observable'' in the sense that $F(\rho_\myloop,\calU_\gamma[\rho_\myloop])=1$. The anomaly structure then implies a strong-to-weak spontaneous symmetry breaking (SWSSB) of the $1$-form symmetry $Z_{\overline{\omega}}$, with $X_\gamma$ being the loop order parameter~\cite{Zhang2025SWSSB}.

In this work, we will use several phrases interchangeably: the mutual 't Hooft anomaly between the strong and weak $1$-form symmetries, the mutual braiding statistics, SSB of the weak $1$-form symmetry, and SWSSB of the strong $1$-form symmetry.\footnote{Strictly speaking, the order parameter of a sponaneously broken symmetry may not always give another symmetry. For example, for the weak symmetry SSB, it may be the case that one can find a loop order parameter $\tilde{Z}$ such that $\tr\tilde{Z}\rho=c$ where $0<c<1$, and there is no choice of order parameter that can make $c=1$. We will not consider such cases in this work. In fact our current understanding supports the conjecture that in a gapped state, with exponentially decaying mutual information and conditional mutual information, a spontaneous breaking of (strong or weak) $1$-form symmetry will always lead to a dual $1$-form symmetry. This conjecture is the natural generalization of the ``modularity'' condition of pure state topological orders.}



\begin{figure}
    \centering
    \includegraphics[width=\linewidth]{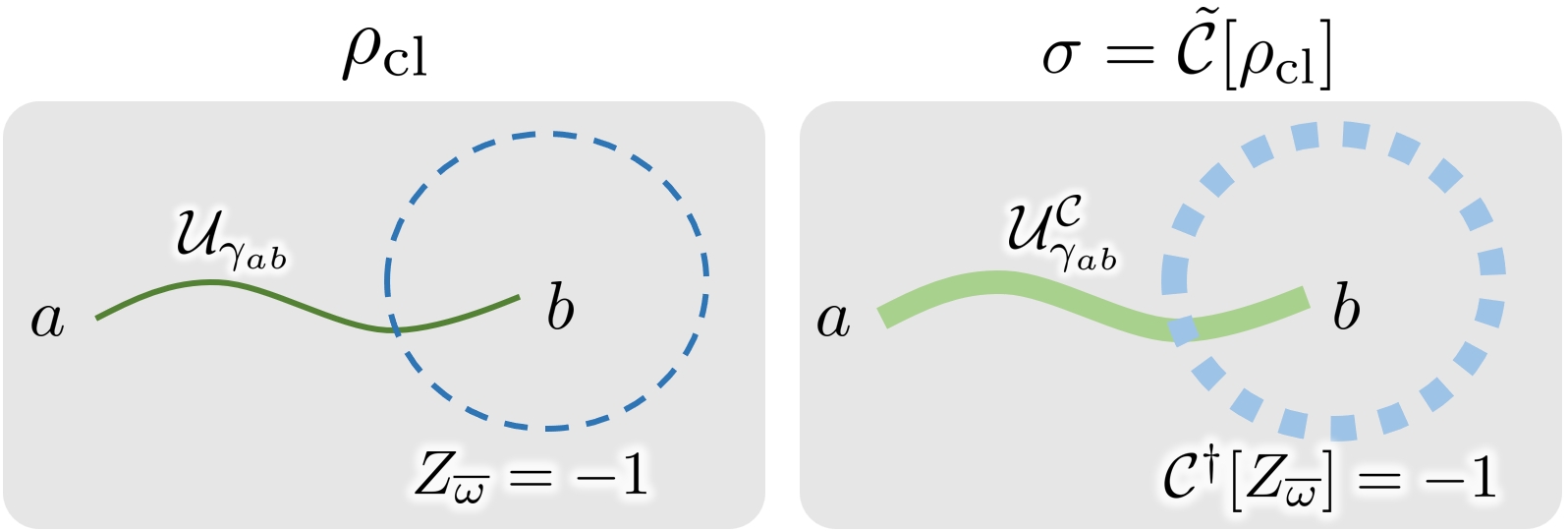}
    \caption{
    Illustration of 't Hooft anomaly / SSB of 1-form symmetries for the loop state $\rho_{\myloop}$ (left), and for a generic state $\sigma$ in same phase as $\rho_\myloop$. In both cases, truncating the weak symmetry action from a closed string to a open one (solid green line) creates two point-like excitations on the two ends $a$ and $b$. Such excitations can be detected non-locally by an operator (`strong symmetry', dashed blue line) wrapping around it. The weak symmetry $\calU^{\calC}$ acting on $\sigma$ takes the form of a channel symmetry generically.
      }
    \label{fig: detection}
\end{figure}


We will show that these symmetry-breaking properties extend to all states in the same phase as $\rho_\myloop$ (under the Markov length definition). Each state exhibits both `dressed' strong and weak 1-form symmetries. Away from the fixed point, these symmetries may be deformed: the weak symmetry becomes a ``channel symmetry''-- a quantum channel supported on a smeared loop, while the strong symmetry becomes a Hermitian operator on a smeared dual loop. When truncated, the weak symmetry action continues to create point-like excitations at its endpoints, detectable by the strong symmetry.

One can obtain explicit expressions of these dressed symmetries. For a state $\sigma$ in the same phase as $\rho_{\myloop}$, there exists a circuit $\tilde\calC$ and its local reversal $\calC$ such that $\sigma = \tilde\calC[\rho_\myloop]$ and $\calC[\sigma]=\rho_\myloop$, as discussed in Sec.\ref{sec: markov_length}. $\sigma$ possesses the following dressed weak and strong 1-form symmetries:
\begin{equation}\label{eq:dressed_symms}
\begin{aligned}
    \sigma &= \calU_\gamma^{\calC}[\sigma]\\
    \sigma &= \calC^\dagger[Z_{\overline{\omega}}]\, \sigma
\end{aligned}
\end{equation}
where $\calU_\gamma^{\calC}$ was defined in Eq.\eqref{eq: dressed_operation} (and Fig.\ref{fig: circuit}(c)), and $\calC^\dagger$ is the dual map of $\calC$ defined by the relation $\tr(A\,\calC[B])=\tr(\calC^\dagger[A]B)$.  The first equality follows directly from Eq.\eqref{eq: dressed_operation}:
$\calU_\gamma^{\calC}[\sigma] = \tilde\calC\,\calU_\gamma\,\calC[\sigma] = \sigma.$
The second equality derives from:
$1=\tr(Z_{\overline{\omega}}\,\calC[\sigma]) = \tr(\calC^\dagger[Z_{\overline\omega}]\sigma)$
which implies $\calC^\dagger[Z_{\overline\omega}]\sigma=\sigma$ by Lemma.\ref{lemma: lemma} in the App.\ref{ap: lemma}.\footnote{A related dressing of strong symmetries was investigated in Refs. \cite{lessa2025higher} and \cite{ellison2024towards}, with former calling it ``symmetry pullback''. There, however, the dressed symmetry is defined on an enlarged Hilbert space, preserving its unitarity. After tracing out the ancilla, it coincides with our definition via the dual channel.}


Note that the dressed weak symmetry in general has the action of a channel, and the fact that $\calC$ is locally reversible is crucial in preserving the locality of this channel symmetry.  This is essential in ensuring that the dressed weak 1-form symmetry $\calU_\gamma^{\calC}$ on $\sigma$ also exhibits SSB:
\begin{equation}\label{eq: dressed_sym_detection}
\calC^\dagger[Z_{\overline{\omega}}]\cdot\calU_{\gamma_{ab}}^{\calC}[\sigma]=-\calU_{\gamma_{ab}}^{\calC}[\sigma]
\end{equation}
where the configurations of $\gamma_{ab}$ and $\overline\omega$ are illustrated in Fig.\ref{fig: detection}. We postpone the proof to Sec. \ref{sec: proof}. Importantly, equation \eqref{eq: dressed_sym_detection} above it means that anyons are well-defined throughout the phase of $\rho_\myloop$; they are non-local excitations created at the endpoints of the string-like channel $\calU_{\gamma_{ab}}^{\calC}$ that can be detected by loop operators $\calC^\dagger[Z_{\overline\omega}]$ encircling them. 

One way to quantifiably measure the existence of excitations created by $\calU_{\gamma_{ab}}^{\calC}$ is by computing the relative entropy between $\sigma$ and $\calU_{\gamma_{ab}}^{\calC}[\sigma] = \calC[\calU_{\gamma_{ab}}(\rho_\myloop)]$ \cite{fan2023diagnostics}. The reason is that the relative entropy $S(\sigma || \calU_{\gamma_{ab}}^{\calC}[\sigma])$ is high when the two states are approximately distinguishable to each other, which is to be expected if $\calU_{\gamma_{ab}}^{\calC}$ creates detectable excitations on $\sigma$. Indeed, since $\sigma$ and $\calU_{\gamma_{ab}}^{\calC}[\sigma]$ are supported in distinct eigenspaces of $\calC^\dagger[Z_{\overline{\omega}}]$ by Eq. \eqref{eq:dressed_symms}, their relative entropy is $+\infty$. Had we taken care of the approximation errors in the local recovery equation $\tilde\calC \calC(\rho_{\myloop}) \approx \rho_{\myloop}$ and in the exponential tails of the reverse channel $\tilde\calC$, we expect the relative entropy to scale with the distance between the excitation points $a$ and $b$, instead of being infinite. This behavior was found in \cite{fan2023diagnostics} for the case of the pure toric code state under Pauli $Z$ and $X$ dephasing below noise threshold.




\subsection{TD on an annulus and topological entropy}
\label{sec: TD_TE}
So far, we have shown that $\rho_\myloop$ has a non-trivial topological degeneracy $Q(\rho_\myloop)$, and that this is a \textit{universal} feature of its phase. Namely, that any other state $\sigma$ in the same phase as $\rho_\myloop$ also has a non-trivial topological degeneracy $Q(\sigma)$, and that it is in one-to-one correspondence with $Q(\rho_\myloop)$. Here, we present two additional universal features of $\rho_\myloop$ that distinguish it from any states in the trivial phase, namely its non-trivial degeneracy on a annulus shaped subsystem, and a lower-bound on topological entropy. Both properties depend only on $\rho_{\myloop}$'s reduced state on a large subregion. We remark that discussions below apply to any state with spontaneously broken weak 1-form symmetry.

We again assume $\sigma=\tilde\calC[\rho_{\myloop}]$ is a state in the same phase as $\rho_\myloop$, both defined on some 2D lattice $\Lambda$. Let $\Gamma\subset\Lambda$ be an annulus-shaped region with both its width and radius much larger than $\calC$'s range. We investigate the locally indistinguishable set $\calQ(\sigma_\Gamma)$, where $\sigma_\Gamma=\tr_{\overline\Gamma}\sigma$.

Let $\gamma_{ab}$ be a string connecting point $a$ in the interior of $\Gamma$ to point $b$ outside $\Gamma$, and let $\overline{\omega}$ be a dual loop wrapping around $\Gamma$, as illustrated below:
\begin{equation*}
    \eqfig{2.4cm}{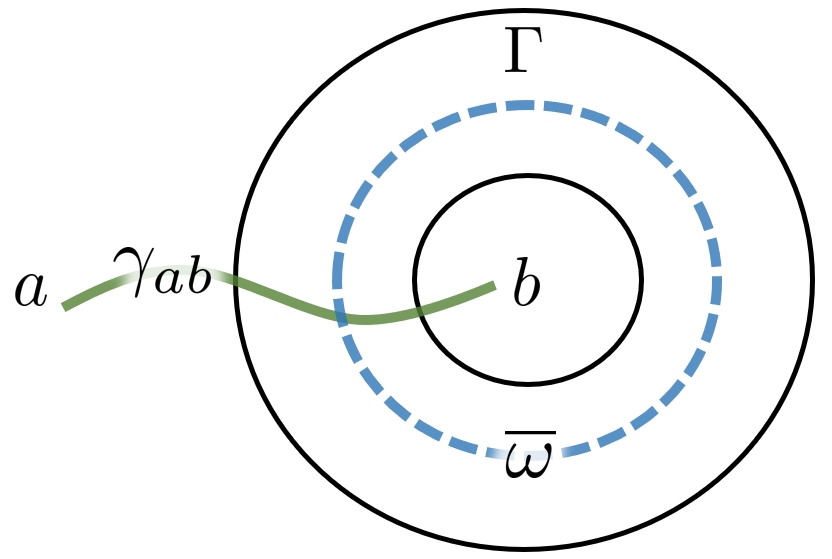}
\end{equation*}

Consider the following state defined on $\Gamma$:
\begin{equation}
    \sigma'_\Gamma \equiv \tr_{\overline\Gamma}\left(\calU^\calC_{\gamma_{ab}}[\sigma]\right).
\end{equation}

The state $\sigma'_\Gamma$ is locally indistinguishable from $\sigma_\Gamma$ on any simply connected region $A\subset\Gamma$, as $\gamma_{ab}$ can always be deformed away from $A$ without altering the resulting state $\calU^\calC_{\gamma_{ab}}[\sigma]$. Furthermore, for a closed non-contractable loop $\overline\omega$ within $\Gamma$ that is away from the boundary:
\begin{equation}
\begin{aligned}
    &\calC^\dagger[Z_{\overline{\omega}}]\cdot\calU_{\gamma_{ab}}^{\calC}[\sigma]=-\calU_{\gamma_{ab}}^{\calC}[\sigma]\\
    \Rightarrow
   \quad&\calC^\dagger[Z_{\overline{\omega}}]\sigma'_{\Gamma}=-\sigma'_{\Gamma}
\end{aligned}
\end{equation}

Since $\sigma_\Gamma$ satisfies $\calC^\dagger[Z_{\overline{\omega}}]\sigma_{\Gamma}=\sigma_{\Gamma}$, the support of $\sigma'_\Gamma$ in Hilbert space must be orthogonal to that of $\sigma_\Gamma$. (Recall that $\calC^\dagger[Z_{\overline \omega}]$ is a hermitian operator as the dual map is hermiticity preserving.)
Therefore, $\calQ(\sigma_\Gamma)$ is a non-trivial set:
\begin{equation}
    \mathsf{conv}\{\sigma'_\Gamma, \sigma_\Gamma\}\subseteq\calQ(\sigma_\Gamma).
\end{equation}

This provides another proof that $\rho_{\myloop}$ belongs to a non-trivial phase: if we take $\sigma=\sketbra{00...0}$, then $\sigma_{\Gamma}$ is simply a product state with a trivial $\calQ$. Remarkably, this proof imposes no condition on the global topology of the lattice.

The degeneracy also leads to a lower-bound on the topological entropy -- another hallmark of topological phases. It has long been noticed that $\rho_\myloop$ has a non-zero topological entropy~\cite{castelnovo2007topological}. Here we show that it is indeed a universal feature shared by all states in $\rho_\myloop$'s mixed-state phase. The proof below is inspired by a similar proof for pure states in \cite{levin_physical_2024}.

Using the Levin-Wen tripartition~\cite{PhysRevLett.96.110405} of an annulus into $A, B$ and $C$, as shown in the figure below, the topological entropy is then given by the conditional mutual information $I_\sigma(A:C|B)$. In addition, we let $A'$ be a region shrunk radially compared to $A$, and $R$ be a region that covers the part of $\gamma_{ab}$ within $A$:
\begin{equation*}
    \eqfig{3.cm}{figs/eqfig_levinwen}
\end{equation*}
Due to the monotonicity of CMI under channels acting on $A$ or $C$ only,
\begin{equation}
    I_{\sigma}(A:C|B)\geq I_{\calC_R[\sigma]}(A:C|B)\geq I_{\calC_R[\sigma]}(A':C|B)
\end{equation}
where $\calC_R$ is part of $\calC$ as defined through decomposition $\calC=\calC_{\overline R}\calC_R$ (see Eq.\eqref{eq: cancellation}). Recall that $\calC_R[\sigma]$ is indistinguishable from $\rho_\myloop$ in the interior of $R$.

We define a new state $\sigma''=\frac{1}{2}\sigma' + \frac{1}{2}\sigma\in\calQ(\sigma_\Gamma)$, which is the equal mixture of $\sigma$ and $\sigma'$. 
We have:
\begin{equation}
    \begin{aligned}
        &I_\sigma(A:C|B)\\
        \geq& I_{\calC_R[\sigma]}(A':C|B)\\
        =& I_{\calC_R[\sigma'']}(A':C|B) + S_{\calC_R[\sigma'']}(A'BC) - S_{\calC_R[\sigma]}(A'BC)\\
        \geq& S_{\calC_R[\sigma'']}(A'BC) - S_{\calC_R[\sigma]}(A'BC)
    \end{aligned}
\end{equation}
the first inequality is due to the data-processing inequality for CMI, while the last inequality is because CMI must be non-negative. Since $\calC_R[\sigma'']_{A'BC}=\frac{1}{2}\calC_R[\sigma]_{A'BC} + \frac{1}{2}\calC_R[\sigma']_{A'BC}$ is an equal mixture of two states being orthogonal to each other, we get:
\begin{equation}\label{eq: S_diff}
    \begin{aligned}
        &S_{\calC_R[\sigma'']}(A'BC) - S_{\calC_R[\sigma]}(A'BC)\\
        =
        &\frac{1}{2}\left(S_{\calC_R[\sigma']}(A'BC) - S_{\calC_R[\sigma]}(A'BC)\right)_{=0} + \log2\\
        = 
        &\log2.
    \end{aligned}
\end{equation}
And therefore
\begin{equation}
    I(A:C|B)\geq \log2.
\end{equation}
In Eq.\eqref{eq: S_diff}, the term in the bracket is zero because the two states $\calC_R[\sigma]_{A'BC}$ and $\calC_R[\sigma']_{A'BC}$ are related by an onsite unitary action, namely a product of $X$ along the segment of $\gamma_{ab}$ within $A'$. 

We remark that following a similar argument as in~\cite{kim2013long}, a non-zero topological entropy leads to a lower bound on the level of topological degeneracy on torus. In this sense the lower-bound above is related to our conclusion in Sec.\ref{sec: TD_of_rho_loop}.

\subsection{\texorpdfstring{Proof of Eq.~\eqref{eq: dressed_sym_detection}}{Proof of preservation of anomaly under weak symmetry dressing}}\label{sec: proof}
We now go back to prove SSB of dressed 1-form symmetry in $\sigma=\tilde\calC[\rho_\myloop]$, namely Eq.\eqref{eq: dressed_sym_detection}.

It suffices to show that $\tr(l.h.s.\text{ of Eq.}\eqref{eq: dressed_sym_detection})=-1$ (according to App.\ref{ap: lemma}). Using Eq.\eqref{eq: dressed_operation} and the definition of the dual map, we have $\tr(l.h.s)=\tr(Z_{\overline\omega}\cdot\calC\,\tilde\calC\,\calU_{\gamma_{ab}}[\rho])$. Let $A$ be a region that intersects with $\overline\omega$ and is distant from $\gamma_{ab}$, and let $B$ be the complement of $A$. This gives rise to the circuit partition $\calC=\calC_B\calC_{A}$. Let $\gamma'_{ab}$ be a curve connecting points $a$ and $b$ that crosses region $A$. The geometry of $A$, $\gamma_{ab}$, $\gamma'_{ab}$, and $\overline\omega$ is illustrated below:
\begin{equation*}
    \eqfig{2.5cm}{figs/eqfig_proof}
\end{equation*}

Since $\gamma_{ab}$ is well separated from $A$, Eq.\eqref{eq: cancellation} implies $\calC\,\tilde\calC\,\calU_{\gamma_{ab}}[\rho]=\calC_B\tilde\calC_B\calU_{\gamma_{ab}}[\rho]$. Additionally, as $\gamma_{ab}\cup\gamma'_{ab}$ forms a closed loop, we have $\calU_{\gamma_{ab}}[\rho]=\calU_{\gamma'_{ab}}[\rho]$. Therefore:
\begin{equation}
    \begin{aligned}
        \tr(\text{\textit{l.h.s.} of Eq.\eqref{eq: dressed_sym_detection}})
        &=\tr\left(Z_{\overline\omega}\cdot\calC_{B}\tilde\calC_B\calU_{\gamma_{ab}}[\rho]\right)\\
        &=\tr\left(\tilde\calC^\dagger_B\calC^\dagger_B[Z_{\overline\omega}]\cdot\calU_{\gamma'_{ab}}[\rho]\right).
    \end{aligned}
\end{equation}

A key observation is that $\tilde\calC^\dagger_B\calC^\dagger_B[Z_{\overline\omega}]$ takes the form $Z_{{\overline\omega}_A}\otimes\tilde\calC^\dagger_B\calC^\dagger_B[Z_{\overline{\omega}_B}]$, where ${\overline\omega}_{A(B)}$ denotes the portion of $\overline\omega$ within region $A$ ($B$). Moreover, $\tilde\calC^\dagger_B\calC^\dagger_B[Z_{\overline{\omega}_B}]$ has support around $\overline{\omega}_B$. This leads to the following commutation relation:
\begin{equation}
\begin{aligned}
    \tilde\calC^\dagger_B\calC^\dagger_B[Z_{\overline\omega}]\cdot\calU_{\gamma'_{ab}}[...]
    &= (Z_{{\overline\omega}_A}\tilde\calC^\dagger_B\calC^\dagger_B[Z_{\overline{\omega}_B}])X_{\gamma'_{ab}}[...]X_{\gamma'_{ab}}\\
    &= -X_{\gamma'_{ab}}(Z_{{\overline\omega}_A}\tilde\calC^\dagger_B\calC^\dagger_B[Z_{\overline{\omega}_B}])[...]X_{\gamma'_{ab}}\\
    &= -\calU_{\gamma'_{ab}}\left[\tilde\calC^\dagger_B\calC^\dagger_B[Z_{\overline\omega}]...\right].
\end{aligned}
\end{equation}
The second equality holds because $\tilde\calC^\dagger_B\calC^\dagger_B[Z_{\overline{\omega}_B}]$ and $X_{\gamma'_{ab}}$ commute due to their non-overlapping supports. Therefore:
\begin{equation}
    \begin{aligned}
        \tr(\text{\textit{l.h.s.} of Eq.\eqref{eq: dressed_sym_detection}})
        &=-\tr\left(\calU_{\gamma'_{ab}}\left[\tilde\calC^\dagger_B\calC^\dagger_B[Z_{\overline\omega}]\cdot\rho\right]\right)\\
        &=-\tr\left(Z_{\overline\omega}\cdot\calC_B\tilde\calC_B[\rho]\right)\\
        &=-\tr\left(Z_{\overline\omega}\rho\right)= -1
    \end{aligned}
\end{equation}
This completes the proof of Eq.\eqref{eq: dressed_sym_detection}.


\section{Discussion and outlook}

We have proposed a new definition of mixed-state phases based on locally reversible channel circuits, and we have shown that local reversibility implies that operations on a state are transformed in a locality-preserving fashion.  As a result, all symmetries (strong and weak) and anomalies of the state are preserved under such channels.  The dressed weak symmetry naturally motivated the concept of a channel symmetry, a new direction to generalize symmetry in open quantum systems.  

We end with some discussions:
\begin{itemize}[leftmargin=.5cm]
\item The implications of channel symmetries, especially with respect to anomalies, deserve further investigation.  Such properties of mixed states are in sharp contrast to the case of pure state quantum phases, where dressed symmetries remain unitary. Moreover, the anomaly between dressed strong and weak symmetries is defined with respect to a particular state (as in Eq. \eqref{eq: dressed_sym_detection}), in contrast to the pure state case where anomaly is a property of symmetries only.  It is an important question whether, for two mixed states $\rho, \sigma$ in the same phase, unitary strong and weak symmetries of $\rho$ can induce dressed symmetries of $\sigma$ \textit{that are also unitary}, and can their anomaly hold at the operator level?  

\item As a primary application of our general results, we have demonstrated that the 2d classical loop ensemble is non-trivial under our new definition, despite being connected to a trivial product state via two-way local channels. It would be interesting to find other classes of states in which our refined definition leads to a different classification of phases.

\item In our definition, Markov length plays a central role in ensuring local reversibility. It has recently been proven that quantum Gibbs states of local Hamiltonians possess finite Markov length~\cite{chen2025quantum}. It is interesting to ask the converse question: if a state has finite Markov length, is it guaranteed to be a Gibbs state of a (quasi-)local Hamiltonian? The 1D special case of this question was addressed in \cite{kato2019quantum}, while in higher dimensions the answer appears to be unknown whenever the Markov length is not exactly zero.

\item Finally, we note that in~\cite{sang2023mixed}, a mixed-state real space renormalization group (RG) was proposed which ideally coarse-grains blocks of the system while preserving correlations between a block and its complement.  Importantly, such an ideal RG scheme constitutes a locally reversible channel which as shown in the current work preserves symmetries and anomalies of the state, as an RG flow should.
\end{itemize}

\noindent \textit{Note added:} While completing this manuscript, we became aware of a recent independent work \cite{yang2025topological} that discusses topological degeneracy and topological entropy in mixed states using the entanglement bootstrap approach. We would also like to draw readers' attention to a forthcoming work \cite{tba} discussing a different form of emergent channel symmetry acting on Gibbs states in open system dynamics.

\begin{acknowledgments}
We acknowledge helpful discussions with Tyler Ellison, Dominic Else, Yaodong Li, Isaac Kim, David P\'erez-Garc\'ia, Bowen Shi. 
S.S. and T.H. thank Isaac Kim and Zijian Song for collaborations on related topics. 
S.S. was supported by the SITP postdoctoral fellowship at Stanford University. L.A.L. acknowledges support from the Natural Sciences and Engineering Research Council of Canada (NSERC) under Discovery Grant No. RGPIN-2020-04688 and No. RGPIN-2018-04380.  This work was also supported by an Ontario Early Researcher Award.
R.M. is supported by the National Science Foundation under Grant No.~DMR-1848336.
T.G. is supported by the National Science Foundation under Grant No.~DMR-2521369. Research at Perimeter Institute is supported in part by the Government of Canada through the Department of Innovation, Science and Industry Canada and by the Province of Ontario through the Ministry of Colleges and Universities.
\end{acknowledgments}

\appendix

\section{Equivalent characterizations of strong symmetry}\label{ap: lemma}
We prove a lemma that was used multiple times in the main text. It concerns equivalent conditions for a state to posses a strong symmetry. A similar result appeared also in \cite{ellison2024towards, lessa2025higher}.
\begin{lemma}\label{lemma: lemma}
For a quantum state $\rho$, a operator $O$ satisfying $\norm{O}=1$, and a quantum channel $\calE$, the following three conditions are equivalent ($\lambda\in\mathbb{C}$ and $|\lambda|=1$):
\begin{enumerate}
    \item $O \cdot \calE[\rho] = \lambda\calE[\rho]$
    \item $\calE^\dagger[O]\cdot\rho = \lambda\rho$
    \item $\tr(\calE^\dagger[O]\cdot\rho) = \tr(O\cdot\calE[\rho]) = \lambda$
\end{enumerate}
\end{lemma}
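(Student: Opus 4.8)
The plan is to prove the three-way equivalence by establishing $(1)\Leftrightarrow(3)$ and $(2)\Leftrightarrow(3)$ separately, which together close the cycle. First I would record the trivial but essential fact that the two traces appearing in condition (3) always coincide: by definition of the dual map, $\tr(\calE^\dagger[O]\,\rho)=\tr(O\,\calE[\rho])$ for every $O$ and $\rho$, so condition (3) is really the single requirement that this common value equals $\lambda$. The two ``easy'' implications $(1)\Rightarrow(3)$ and $(2)\Rightarrow(3)$ then follow by taking traces: from $O\,\calE[\rho]=\lambda\calE[\rho]$ I take $\tr$ of both sides and use that $\calE[\rho]$ is a density matrix ($\tr\calE[\rho]=1$, since $\calE$ is trace preserving) to get $\tr(O\,\calE[\rho])=\lambda$; identically, $\calE^\dagger[O]\,\rho=\lambda\rho$ gives $\tr(\calE^\dagger[O]\,\rho)=\lambda$.

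The substance is in the reverse implications, for which I would isolate a single helper claim: if $M$ is any operator with $\norm{M}\le 1$ and $\sigma$ is a density matrix with $\tr(M\sigma)=\lambda$ where $|\lambda|=1$, then $M\sigma=\lambda\sigma$. To prove it I would diagonalize $\sigma=\sum_i p_i \ketbra{\psi_i}{\psi_i}$ with $p_i>0$ and $\sum_i p_i=1$, so that $\lambda=\sum_i p_i\langle\psi_i|M|\psi_i\rangle$ is a convex combination of the numbers $z_i\equiv\langle\psi_i|M|\psi_i\rangle$, each obeying $|z_i|\le\norm{M}\le 1$. Since $|\lambda|=1$ lies on the boundary of the unit disk and the disk is strictly convex, every $z_i$ with $p_i>0$ must equal $\lambda$. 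Fixing such an $i$, the chain $|\lambda|=|\langle\psi_i|M|\psi_i\rangle|\le\norm{M\ket{\psi_i}}\le\norm{M}\le 1$ becomes a string of equalities; equality in the Cauchy--Schwarz step forces $M\ket{\psi_i}$ to be parallel to $\ket{\psi_i}$, and evaluating the inner product pins the proportionality constant to $\lambda$, i.e.\ $M\ket{\psi_i}=\lambda\ket{\psi_i}$. Summing over $i$ yields $M\sigma=\lambda\sigma$.

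With the helper claim in hand, $(3)\Rightarrow(1)$ is the case $M=O$, $\sigma=\calE[\rho]$, using $\norm{O}=1$; and $(3)\Rightarrow(2)$ is the case $M=\calE^\dagger[O]$, $\sigma=\rho$. For the latter I need $\norm{\calE^\dagger[O]}\le 1$, which is where I would invoke that $\calE$ is trace preserving, so its dual $\calE^\dagger$ is unital and completely positive, and a unital CP map is an operator-norm contraction ($\norm{\calE^\dagger[O]}\le\norm{\calE^\dagger[\identity]}\,\norm{O}=\norm{O}=1$). The main obstacle is the helper claim rather than the bookkeeping: the delicate points are that strict convexity of the unit disk is what lets me pass from the scalar average back to the individual eigenvectors (this genuinely needs $|\lambda|=1$, not merely $|\lambda|\le 1$), and that the Cauchy--Schwarz equality condition holds for a general, not necessarily Hermitian, $O$ and is precisely what upgrades the expectation-value statement into the operator identity $M\sigma=\lambda\sigma$.
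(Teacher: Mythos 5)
Your proof is correct, and its core ingredient --- the saturation argument showing that $\tr(M\sigma)=\lambda$ with $|\lambda|=1$, $\norm{M}\le 1$ forces $M\sigma=\lambda\sigma$ via eigendecomposition, extremality of $\lambda$ in the unit disk, and the Cauchy--Schwarz equality case --- is essentially the same argument the paper uses for $(3)\Rightarrow(1)$. Where you genuinely diverge is in how you reach condition (2): the paper proves $(1)\Leftrightarrow(2)$ separately by passing to a Stinespring dilation $\calE[\rho]=\tr_A[U(\rho\otimes\sketbra{\mathbf{0}}_A)U^\dagger]$ and conjugating $O$ by the dilating unitary, whereas you reuse the same helper claim with $M=\calE^\dagger[O]$ and $\sigma=\rho$, at the cost of one extra input: that the dual of a trace-preserving CP map is unital CP and hence an operator-norm contraction, so $\norm{\calE^\dagger[O]}\le 1$. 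This buys you a more uniform proof (one lemma applied twice, closing the cycle through (3)) and avoids the dilation entirely; it also sidesteps a slightly delicate step in the paper's dilation argument, where one must justify passing from an identity for the partial trace back to an identity for the full dilated state. The paper's route, in exchange, gives the direct operator-level equivalence $(1)\Leftrightarrow(2)$ without invoking contractivity of unital CP maps. Both are sound; just make sure you cite or briefly justify the contraction property $\norm{\calE^\dagger[O]}\le\norm{O}$ (Russo--Dye, or the standard fact for unital completely positive maps), since it is the one ingredient your argument needs that the paper's does not.
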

\begin{proof}

    (1$\Leftrightarrow$2) Using the Stinespring dilation of the channel $\calE$:
    \begin{equation}
        \calE[\rho] = \tr_A[U(\rho\otimes\sketbra{\bf 0}_A)U^\dagger]
    \end{equation}
    Then we have:
    \begin{equation}
        \begin{aligned}
            &O \cdot \calE[\rho] = \lambda\calE[\rho]\\
            \Leftrightarrow\ \ 
            &O \cdot U(\rho\otimes\sketbra{\bf 0}_A)U^\dagger = \lambda U(\rho\otimes\sketbra{\bf 0}_A)U^\dagger\\
            \Leftrightarrow\ \ 
            &U^\dagger O U \cdot (\rho\otimes\sketbra{\bf 0}_A) = \lambda \rho\otimes\sketbra{\bf 0}_A\\
            \Leftrightarrow\ \ 
            &{\!}_A\!\braket{0 | U^\dagger O U | 0}_A \cdot \rho = \lambda \rho\\
            \Leftrightarrow\ \ 
            &\calE^\dagger[O]\cdot\rho = \lambda\rho
        \end{aligned}
    \end{equation}
    (1$\Leftrightarrow$3)
    Going from 1 to 3 is simply by taking partial trace. To go from 3 to 1, we consider an eigen-desomposition of the density matrix: $\calE[\rho] = \sum_{i}p_i\sketbra{\psi_i}$, with each $p_i\geq0$ and $\sum_i p_i = 1$. Then we have:
    \begin{equation}
    \begin{aligned}
        &1 = |\tr(O\calE[\rho])| = \left|{\sum}_i p_i \braket{\psi_i | O | \psi_i}\right|\\
        &\leq {\sum}_i p_i \left|\braket{\psi_i | O | \psi_i}\right| \leq {\sum}_i p_i= 1.
    \end{aligned}
    \end{equation}
    Therefore both inequalities are saturated. To saturate the first one, all $\braket{\psi_i | O | \psi_i}$ must have the same phase. To saturate the second, each $\left|\braket{\psi_i | O | \psi_i}\right|=1$. Combining with the condition that $\tr(O\calE[\rho])=\lambda$, we know all $\ket{\psi_i}$'s are eigenstates of $O$ with the eigenvalue $\lambda$. We thus obtain:
    \begin{equation}
        O\cdot\calE[\rho] = \lambda \calE[\rho]
    \end{equation}
\end{proof}

\section{Example of two-way connected states in different phases} \label{appendix:two-way_different_phases}

The Markov length definition of phase equivalence (Definition \ref{def: markov_path}) requires not only that two states $\rho$ and $\sigma$ in the same phase are two-way connected by local channels $\calE$ and $\calF$ -- $\calE(\rho) = \sigma$ and $\calF(\sigma) = \rho$ -- but also that $\calE$ is locally reversible by $\calF$. Here, we exemplify the importance of this extra condition with the case of the classical loop state $\rho_{\myloop}$ and the trivial state $\trivialstate = \identity / \dim(\calH)$. We show they are two-way connected via continuous one-parameter families of channels $\calE_p$ and $\calF_q$ (generated by Lindbladians), even though they are not in the same phase, since $\rho_{\myloop}$ has classical memory but the trivial state doesn't. Hence, the Markov length must diverge along both paths, which we also show explicitely via mappings to statistical models. 

To go from $\rho_{\myloop}$ to $\rho_{\mytrivial}$, we apply $X$-dephasing channel $\calE_p = \calD_{X}(p) = \prod_e [(1-p) (\cdot) + p X_e (\cdot)X_e]$ for $p \in [0,1/2]$:
\begin{equation}
    \rho^{\myloop\to\mytrivial}(p) = \prod_e [(1-p) \rho_{\myloop} + p X_e \rho_{\myloop} X_e]
\end{equation}
At maximal dephasing strength $p = 1/2$, $\rho_{\mytrivial} \propto \identity$ is reached because the strong Pauli-$Z$ 1-form symmetries of $\rho_{\myloop}$, which completely characterize it, become weak. 

The Markov CMI of $\rho^{\myloop\to\mytrivial}(p)$ can be shown to be equal to the one of the $X$-dephased pure toric code \cite{dennis2002topological, fan2023diagnostics, bao2023mixed, lee_quantum_2023}, which is known have Markov length diverging at $p_c \approx 0.11$ by mapping to the random bond Ising model \cite{sang2024stability}. Thus, we conclude that there is a transition from the classical memory phase to a trivial phase at $p=p_c$.

\subsection{From the trivial state to the classical loop state}
\label{appendix:trivial_to_classical}


Alternatively, to continuously transition from $\trivialstate=\identity/\dim(\calH)$ to $\rho_\myloop$, we first replace $\trivialstate$ with the state $\ket{\bf 0} \equiv \ket{00\cdots0}$, also in the trivial phase, via the replacement channel $\mathcal{R}_{\ket{0}} \equiv \ketbra{\bf 0}{\bf 0} \tr[\cdot]$, and then dephase the plaquette terms $B_f = \prod_{e \in f} X_e$ via $\mathcal{D}_{B}(q) = \prod_{f} [(1-q)(\cdot) + q B_f (\cdot) B_f]$, resulting in the channel $\mathcal{F}_q = \calD_{\text{plaq}}(q) \circ \calR_{\ket{\bf 0}}$:
\begin{align}
    \rho^{\mytrivial\to\myloop}(q) & = \mathcal{F}_q(\rho_{\mytrivial}) \\ 
    & = \prod_{f} [(1-q) \ketbra{\bf 0}{\bf 0} + q B_f \ketbra{\bf 0}{\bf 0} B_f] \label{eq:tr_to_cl_def}
\end{align}
for $q \in [0,1/2]$. For $q=1/2$, the classical loop state with no non-contractible loops $\rho_{\myloop}=\rho_{\myloop}^{00}$ is reached. To prepare anoter state in the classical memory $Q(\rho_\myloop)$ (e.g. $\rho^{10}_{\myloop} = X_{\bar{\omega}_x} \rho_{\myloop}^{00} X_{\bar{\omega}_x}$)one just needs to act with the appropriate Wilson loop operator ($X_{\bar{\omega}_x}$) in $\ket{\bf0}$.

By calculating the Markov CMI and the topological entropy (See Fig.\ref{fig:plaq}), we find numerical evidence that $\rho^{\mytrivial\to\myloop}(q)$ is in the same (trivial) phase as $\rho^{\mytrivial\to\myloop}(0) = \rho_{\mytrivial}$ for $q < 1/2$. The intuition for this is that $\rho^{\mytrivial\to\myloop}(q)$ is a mixture over all loops with weights exponentially decaying with the \emph{area} of the loop, and not its boundary, for $q<1/2$, which heavily suppresses long-range correlations necessary for a topological classical memory.

\begin{figure*}[tb]
    \centering
        \includegraphics[width=0.45\textwidth]{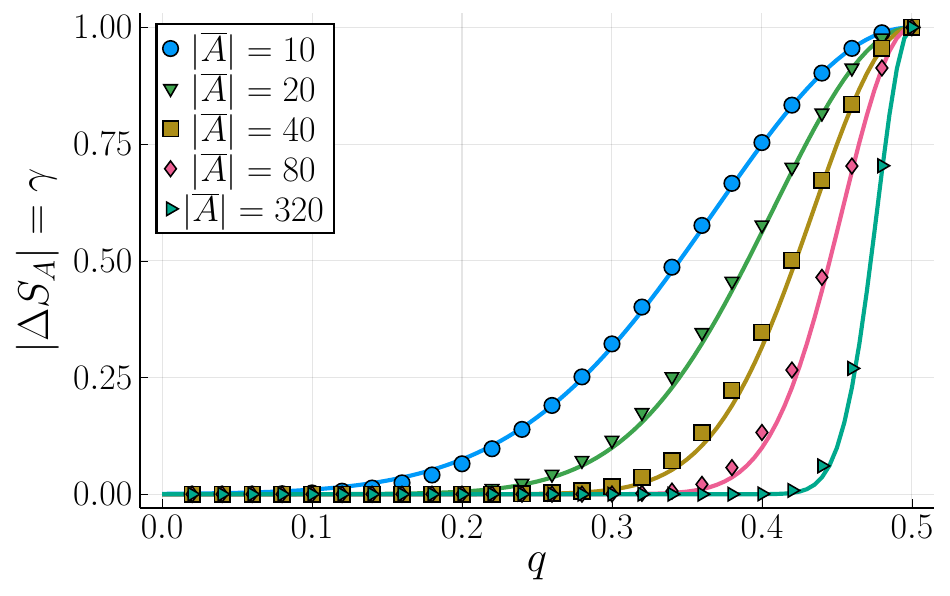}
        \label{fig:plaq_Delta_S}
        \hspace{1cm}
        \includegraphics[width=0.45\textwidth]{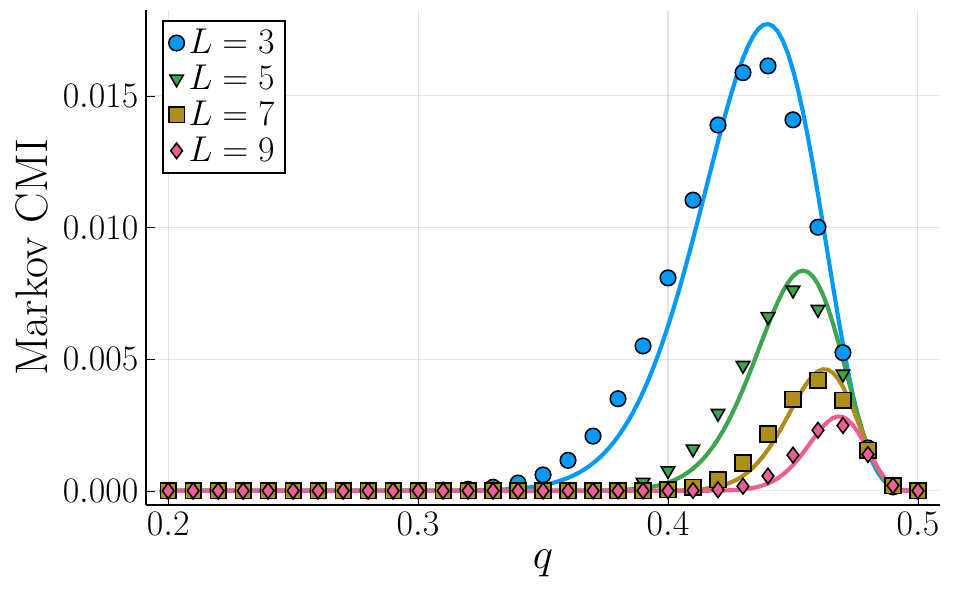}
        \label{fig:plaq_Markov_CMI}
    \caption{Entropic measures of $\rho^{\mytrivial\to\myloop}(q)$ that signal the phase transition at $q_c = 1/2$: (Left) the non-extensive correction $\Delta S_A(q)$ to the entanglement entropy, whose magnitude matches the topological entropy $\gamma$ in the thermodynamic limit; and (Right) the CMI $I(A:C|B)$ in the Markov tripartition for different widths $L$ of $B$, where $A$ and $C$ have fixed length. The solid lines were evaluated using the analytical approximation for $\Delta S_A(q) = S_A(q) - S_A^{\text{pm.}}(q)$ given by Eq. \eqref{eq:DeltaS_approx}, whereas the data points were numerically computed from the exact expression, Eq. \eqref{eq:DeltaS_exact}.
    }
    \label{fig:plaq}
\end{figure*}

We can calculate these quantities due to the following expression for $\rho^{\mytrivial\to\myloop}(q)$:
\begin{align}
    \rho^{\mytrivial\to\myloop}(q) & = \sum_{\sigma \in C_2(M)} (1-q)^{N_f - |\sigma|} q^{|\sigma|} \ketbra{\partial \sigma}{\partial \sigma} \\
    & \propto \sum_{\sigma \in C_2(M)} \lambda^{|\sigma|} \ketbra{\partial \sigma}{\partial \sigma},
\end{align}
where $\lambda = q / (1-q)$, and $C_2(M)$ is the space of all two-cochains of the base space $M$ with $\mathbb{Z}_2$ coefficients, i.e. $C_2(M) \simeq \{0,1\}^{N_f}$ with $N_f$ the number of faces. Given $\sigma \in C_2(M)$, we define the boundary state on the edges of $M$ as $\ket{\partial \sigma} = \bigotimes_{e} \prod_{f \ni e} \sigma_f$, and $|\sigma| = \sum_f \sigma_f$. The reduced density matrix to a subset of links $A$ is given by
\begin{align}
    \rho^{\mytrivial\to\myloop}_A(q) & \propto \sum_{\sigma \in C_2(\overline{A})} \lambda^{|\sigma|} \ketbra{r_A(\partial \sigma)}{r_A(\partial \sigma)},
\end{align}
where $r_A$ is the restriction map to $A$, and $C_2(\overline{A})$ is generated by all faces that have at least one edge in $A$. The sum above includes redundant states $\ket{\partial \sigma}$ iff $r_A(\partial \sigma) = r_A(\partial \sigma')$ for $\sigma \neq \sigma'$. This happens whenever $\sigma$ and $\sigma'$ differ by a cocycle $\eta \in Z_2(\overline{A})$. For a connected region $A$, for example, we have
\begin{align}\label{eq:reduced_loop_soup}
    \rho^{\mytrivial\to\myloop}_A(q) & \propto \sum_{\mathclap{\ell \in B_1(\overline{A})}} \left(\lambda^{|\sigma^*(\ell)|} + \lambda^{|\overline{A}| - |\sigma^*(\ell)|} \right) \ketbra{r_A(\ell)}{r_A(\ell)},
\end{align}
where $B_1(\overline{A}) = \{\partial \sigma \mid \sigma \in C_2(\overline{A}) \}$ is the set of all contractible loops $\ell$, and $\sigma^*(\ell)$ is an arbitrary 2-cochain whose boundary is $\partial \sigma^*(\ell) = \ell$. For disconnected $A$, $\rho_A^{\mytrivial\to\myloop}(\rho)$ is a tensor product over each connected component, as can be seen from the locality of $\calD_{\text{plaq}}$ in Eq. \eqref{eq:tr_to_cl_def}.

For $q < 1/2$ and large regions $|\overline{A}| \to \infty$, one of the $\lambda < 1$ factors in Eq. \eqref{eq:reduced_loop_soup} is exponentially suppressed compared to the other for almost all loops $\ell$. Thus, any quantity depending solely on the eigenvalues of $\rho_A^{\mytrivial\to\myloop}(q)$, such as its entropy $S_A(q)$, will follow the probability distribution of the 2d classical paramagnet on the plaquettes of $\overline{A}$. Since the paramagnet is a Gibbs ensemble, its entropy is extensive in system size, which implies zero (quantum) conditional mutual information $I(A:C|B)$ for any tripartition $A|B|C$ (See Fig. \ref{fig:plaq}). As such, the Markov length of $\rho_A^{\mytrivial\to\myloop}(q)$ does not diverge for $q < 1/2$, thus constituting a single phase.

For any finite size, however, the entropy $S_A(q)$ aquires non-extensive contributions even for $q < 1/2$, which in turn leads to a phase transition. We study its critical behavior in the next section.

\subsection{Critical behavior}
\label{appendix:critical_behavior}

With more work, we can also analyze the critical behavior near the critical point $q_c = 1/2$. Because of our interest in the CMI $I(A:C|B)$ for both the Markov and the Levin-Wen partitions, we will focus on the deviations of the entropy $S_A(q)$ from the extensive contribution, which gets canceled in the CMI. For that, we start from the expression for $S_A(q)$ coming from Eq. \eqref{eq:reduced_loop_soup} and divide it into two terms:
\begin{align}
    S_A(q) & = - \frac{1}{2\mathcal{Z}} \sum_{\ell \in B_1(\overline{A})} (\lambda^{|\sigma^*(\ell)|} + \lambda^{|\overline{A}| - |\sigma^*(\ell)|}) \times \\
    & \qquad \qquad \qquad \log(\lambda^{|\sigma^*(\ell)|} + \lambda^{|\overline{A}| - |\sigma^*(\ell)|}) \\
    & = - \frac{1}{\mathcal{Z}} \sum_{n=0}^{|\overline{A}|} \binom{|\overline{A}|}{n} \lambda^n \log(\lambda^n) - \\
    & \qquad \quad  \frac{1}{\mathcal{Z}} \sum_{n=0}^{|\overline{A}|} \binom{|\overline{A}|}{n} \lambda^n \log(1 + \lambda^{|\overline{A}|-2n}) \label{eq:DeltaS_exact}\\
    & = |\overline{A}| s_{A}^{\text{pm.}}(q) + \Delta S_A(q),
\end{align}
where $\mathcal{Z} = \frac{1}{2} \sum_{n=0}^{|\overline{A}|} \binom{|\overline{A}|}{n} (\lambda^n + \lambda^{|\overline{A}|-n}) = (1+\lambda)^{|\overline{A}|}$, and in the second line we assumed $A$ to be connected and transformed the sum over loops $\ell$ into a sum over the area variable $n \equiv |\sigma^*(\ell)|$. 

The first term above is extensive in $A$ and corresponds to the paramagnetic entropy. The second term, $\Delta S_A(q)$, is the deviation from the paramagnetic limit. Writing $\lambda = q/(1-q)$, we notice that $\Delta S_A(q)$ equals the expectation value of $ \log(1 + \lambda^{|\overline{A}|-2n})$ w.r.t. the binomial distribution $P(n) \propto  \binom{|\overline{A}|}{n} q^n (1-q)^{|\overline{A}|-n}$. We are interested in the behavior of  $\Delta S_A(q)$ for $q = q_c - \delta q$, for small $0 \leq \delta q \ll q_c$. In this limit, one finds that 
the binomial distribution is peaked at $n \approx |\overline{A}|/2 - |\overline{A}| \delta q$. Defining $n' = n - |\overline{A}|/2$, one therefore finds

	\begin{eqnarray} 
		 -\Delta S_A(q) & = & \langle \log(1 + e^{- 2n' \log(\lambda)})\rangle \\
		& \approx & \langle \log(1 + e^{8 n' dq})\rangle \\
		& \approx & \log(2) + 4 \langle n' \rangle dq + 8 \langle n'^2 \rangle (dq)^2 \\
        & \approx & \log(2) - 2 |\overline{A}| (dq)^2 \\
         & \approx &\log(2)e^{-|\overline{A}|/\xi(q)}, \label{eq:DeltaS_approx}
	\end{eqnarray}
where $\xi(q) = c (q_c-q)^{-\nu}$ diverges with critical exponent $\nu = 2$ as $q \to q_c$ ($c = \log(2)/2$). Above, all expectation values $\langle ...\rangle$ are w.r.t. to the aforementioned binomial distribution, and we have used  standard properties of this distribution to evaluate them for $\delta q \ll 1$. Contrary to the usual decay in length, $\xi(q)$ measures a correlation area. This is not an artifact of the approximation method, and comes from the fact that $S_A(q)$ depends on the region $A$ only via its area and number of connected components.

Since the volume and area law contributions for the entropy cancel out in the Markov tripartition, the Markov CMI is dictated by $\Delta S(q)$:
\begin{align}
    &I(A:C|B) \nonumber \\ & =  \Delta S_{BC}(q) + \Delta S_{AB}(q) - \Delta S_B(q) - \Delta S_{ABC}(q) \\
    & \approx \log(2)e^{-|\overline{B}|/\xi(q)} (1 - e^{-|\overline{A}|/\xi(q)})(1 - e^{-|\overline{C}|/\xi(q)})
\end{align}
From the expression above, we conclude that Markov CMI converges to zero for all $q \in [0,1/2]$, while the Markov length diverges as $q$ approaches $q_c =1/2$.

The topological entropy $\gamma$ can also be estimated. As a sublinear correction to the von Neumann entropy $S_A(q) = |\overline{A}| s_{A}^{\text{pm.}}(q) - \gamma(q)$, it is clear from Eq. \eqref{eq:DeltaS_approx} that $\gamma(q) \to 0$ for $q < 1/2$ and $\gamma(q_c) = \log(2)$, as expected for the classical loop state. The same result is found if $\gamma$ is calculated from the CMI in the Levin-Wen tripartition, up to a correction given by the Markov CMI for the same region sizes, which goes to zero in the thermodynamic limit.

To verify the predicted behavior near the critical point, we numerically calculate the non-extensive part of the von Neumann entropy $\Delta S_A(q) = S(\rho^{\mytrivial\to\myloop}_A(q)) - |\overline{A}| s_A^{\text{pm.}}(q)$, and the Markov CMI. For both, equation \eqref{eq:DeltaS_exact} was used, and the results are the solid lines of Fig. \ref{fig:plaq}. We were able to verify that the approximation given by Eq. \eqref{eq:DeltaS_approx} not only gives the correct critical behavior at $q \approx q_c$, but is also accurate even for $q / q_c \ll 1$ (and $|\overline{A}|$ large), far from the regime ($q \approx q_c$) it was initially derived from.



\subsection{Local non-reversibility and anomaly breaking}
\label{appendix:local_non-reversibility}

We finish the appendix by explaining in more details how $\rho_\mytrivial$ and $\rho_{\myloop}$ are two-way connected, even though they are in different phases. Apart from the divergence of the Markov length that was already discussed, another way to see that such two-way connection is not locally reversible is to compute what would the dressed strong and weak 1-form symmetries be for $\rho_{\mytrivial}$. If the channels taking one state to the other were locally reversible by one another, then the strong and weak symmetryes of $\rho_\myloop$ -- respectively, $Z_{\overline{\omega}}$ and $\calU_\gamma$ -- would become dressed symmetries of $\rho_{\mytrivial}$ maintaining the same anomalous braiding relations. However, according to the transformation rule of Eq. \eqref{eq:dressed_symms}, we would have
\begin{align}
    Z_{\overline{\omega}} & \to \calE_{p=1/2}^*(Z_{\overline{\omega}}) = \identity, \\
    \calU_\gamma & \to \calE_{p=1/2} \circ \calU_\gamma \circ \calF_{q=1/2} = \calR_{\identity}.
\end{align}
These two are valid strong and weak symmetries of the state $\rho_{\mytrivial} \propto \identity$, but they are trivially so. In particular, the anomalous braiding is lost. Moreover, the weak symmetry $\calU_\gamma$ is dressed to a channel with global support, instead of just around the loop $\gamma$. Both of these properties signal the failure of the two channels to be locally reversible by each other.


\bibliography{main.bib}
\end{document}